\documentclass[11pt]{article}
\usepackage{amsmath,amssymb,bm,amsthm,array,tikz,enumerate}
\usepackage{mathtools}
\usepackage{url}
\usepackage{lscape}
\usepackage[margin=30truemm]{geometry}
\usepackage{float}
\usepackage{makecell}
\usepackage{blkarray}
\usepackage{multirow}
\usepackage[hidelinks]{hyperref}

\usetikzlibrary{positioning,graphs}

\numberwithin{equation}{section}

\DeclareMathOperator{\Spec}{Spec}

\DeclareMathOperator{\Aut}{Aut}

\newcommand{\MC}[1]{\mathcal{#1}}
\newcommand{\MB}[1]{\mathbb{#1}}

\newcommand{\BM}[1]{{\bm #1}}

\newtheorem{thm}{Theorem}[section]
\newtheorem{lem}[thm]{Lemma}
\newtheorem{pro}[thm]{Proposition}
\newtheorem{cor}[thm]{Corollary}

\theoremstyle{definition}

\allowdisplaybreaks

\newcommand{\G}{\Gamma}
\newcommand{\D}{\Delta}

\newcommand{\init}{\, \bm{\mathrm{j}}}

\title
{
Arc search in graphs via Szegedy walks \\
}

\author{
\hfill
\makebox[0.4\textwidth][c]{
Sho Kubota\thanks{
Department of Mathematics Education,
Aichi University of Education,
1 Hirosawa, Igaya-cho, Kariya, Aichi 448-8542, Japan.
\texttt{skubota@auecc.aichi-edu.ac.jp}}
}
\hfill
\and
\hfill
\makebox[0.4\textwidth][c]{
Kiyoto Yoshino\thanks{
Department of Information Science, 
Faculty of Science, 
Toho University,
2-2-1 Miyama, Funabashi, Chiba 274-8510, Japan
\texttt{kiyoto.yoshino@is.sci.toho-u.ac.jp}
}
}
\hfill
}
\date{}

\begin{document}
\maketitle

\begin{abstract}
This paper studies the search for a single arc in a graph using the Szegedy walk.
Arc search can be interpreted as finding a quantum particle not only in its position but also with a specific internal state.
The quantum walk employed in this study is essentially the model proposed by Segawa and Yoshie for the purpose of edge search.
First, we investigate how the symmetry of a graph is reflected in its time evolution matrix, and provide a sufficient condition under which the success probability of the search is independent of the marked arc.
In particular, we prove that if a graph is arc-transitive, the success probability is independent of the choice of the marked arc.
Next, we analyze path and cycle graphs and show that the quantum search is ineffective for these graphs, whereas it performs well for complete bipartite graphs $K_{n,n}$.
These results provide a theoretical foundation for studying arc and edge searches on various graphs, while also suggesting new problems concerning the eigenvalue analysis of edge-signed graphs in spectral graph theory.
\vspace{8pt} \\
{\it Keywords:} quantum search, signed graph, automorphism \\
{\it MSC 2020 subject classifications:} 05C50; 81Q99
\end{abstract}

\section{Introduction}

Quantum search has been widely studied as one of the representative applications of quantum computing and is known to provide dramatic speedups over classical search algorithms.
In particular, Grover's algorithm~\cite{grover1996fast} demonstrated that a marked item can be found in a database in $O(\sqrt{N})$ steps (more precisely, $\Theta(\sqrt{N})$ steps), which has attracted significant attention.
Subsequently, it has been shown that Grover's algorithm can be generalized to efficiently search for multiple marked items~\cite{boyer1998tight}.
These developments have established quantum search as a versatile tool for a variety of algorithms.
This is discussed, for example, in the textbook by Nielsen and Chuang~\cite{nielsen2010quantum}.

In parallel with this trend,
search algorithms based on quantum walks began to be studied,
such as~\cite{shenvi2003quantum, szegedy2004quantum}.
Furthermore, subsequent studies have established new frameworks for acceleration that exploit graph structures,
including applications to element distinctness~\cite{ambainis2007quantum} and triangle detection~\cite{magniez2007quantum}.

This paper formulates the search problem from a perspective different from conventional quantum search.
The framework for defining quantum walks on finite-dimensional vector spaces indexed by symmetric arc sets has been extensively studied, but most existing search algorithms mark vertices in a graph as the search targets~\cite{chakraborty2016spatial, krovi2016quantum}.
Search algorithms for finding edges in graphs have also been discussed recently in~\cite{segawa2021quantum, yoshie2022quantum}.
In contrast, this study considers the setting in which an arc is marked as the search target.
In this framework, the search can be interpreted as not only identifying the vertex where the walker is found but also simultaneously detecting internal degrees of freedom, such as the direction of motion or the spin state.

First, we study the relationship between search behavior in discovering a marked arc and symmetry of graphs.
We refer to the probability of finding a marked item, whether a vertex or an arc, as the \emph{success probability}.
In the case of vertex search on vertex-transitive graphs, it is widely understood that the probability of finding a marked vertex is independent of which vertex is marked.
Similarly, in arc search on arc-transitive graphs, the success probability is likewise expected to be independent of the choice of marked arc.
However, this fact is not necessarily obvious and requires a rigorous mathematical justification.
One of our main results is to clarify conditions for symmetry of graphs that guarantee equal probabilities of discovering marked arcs.
In particular, we prove that if a graph is arc-transitive, the success probability is independent of the choice of the marked arc.
See Corollary~\ref{0820-1} for details.

Next, we analyze the search behavior and computational complexity on path graphs, cycle graphs, and complete bipartite graphs.
We show that our search algorithm is not effective for path or cycle graphs, whereas it performs well on complete bipartite graphs $K_{n,n}$.
More specifically, the success probabilities on the path graphs $P_n$ and the cycle graphs $C_n$ are $\frac{1}{2n-2}$ and $\frac{1}{2n}$, respectively, and are independent of the measurement time.
On the other hand, for the success probability $p^*$ at an appropriately chosen measurement time in complete bipartite graphs $K_{n,n}$, it is shown that
\[ p^* \geq \frac{1}{2} - \Theta \left( \frac{1}{\sqrt{n}} \right) \]
holds and the corresponding measurement time is $\Theta(n)$.
Since $K_{n,n}$ has $2n^2$ arcs, this yields a quadratic speedup for this graph.



\section{Preliminaries}


See \cite{godsil2013algebraic} for basic terminology related to graphs.
Let $\G =(V, E)$ be a graph with the vertex set $V$ and the edge set $E$.
Throughout this paper, we assume that graphs are simple and finite,
i.e., $|V| < \infty$ and $E \subset \{\{x,y\} \subset V \mid x \neq y\}$.
Define $\MC{A} = \MC{A}(\G)=\{ (x, y), (y, x) \mid \{x, y\} \in E \}$,
which is the set of the \emph{symmetric arcs} of $\G$.
The origin $x$ and terminus $y$ of $a=(x, y) \in \MC{A}$ are denoted by $o(a)$ and $t(a)$, respectively.
We write the inverse arc of $a$ as $a^{-1}$.

Before defining the time evolution matrix used in the search,
we introduce several matrices related to Grover walks.
Note that Grover walks are also referred to as arc-reversal walks or arc-reversal Grover walks,
and they are known as a special case of bipartite walks \cite{Chen2024Hamiltonians}.
Let $\G = (V, E)$ be a graph, and set $\MC{A} = \MC{A}(\Gamma)$.
The \emph{boundary matrix} $d = d(\G) \in \MB{C}^{V \times \MC{A}}$ is defined by
$d_{x,a} = \frac{1}{\sqrt{\deg x}} \delta_{x, t(a)}$,
where $\delta_{a,b}$ is the Kronecker delta.
By directly calculating the entries based on the definition of matrix multiplication, it follows that
\begin{equation*} 
dd^* = I,
\end{equation*}
where $I$ is the identity matrix.
The \emph{shift matrix} $S = S(\G) \in \MB{C}^{\MC{A} \times \MC{A}}$
is defined by $S_{a, b} = \delta_{a,b^{-1}}$.
Clearly, $S^2 = I$.
The time evolution matrix $U = U(\G) \in \MB{C}^{\MC{A} \times \MC{A}}$ for the Grover walk is defined by
\[ U := S(2d^*d-I), \]
but we do not use $U$ itself.
Instead, we use a modified form of $U$,
which is introduced in the next section, to perform the search.

Furthermore, we describe relationships between symmetry of graphs and Grover walks.
For the convenience of a broad readership,
we begin by introducing automorphisms of graphs and their associated permutation matrices.
Let $\G = (V, E)$ be a graph.
A mapping $g: V \to V$ is an \emph{automorphism} of $\G$
if $g$ is bijective, and $\{x,y\} \in E$
if and only if $\{g(x), g(y) \} \in E$.
We denote the set of all automorphisms of $\G$ by $\Aut(\G)$.

Let $g$ be an automorphism of a graph $\G$.
Since $g$ is bijective,
it defines the permutation matrix $M_{g} \in \MB{C}^{V \times V}$ given by $(M_{g})_{x,y} = \delta_{x, g(y)}$.
Basic properties of permutation matrices are
\begin{equation} \label{1121-1}
M_{g}^\top = M_{g}^{-1} = M_{g^{-1}},
\end{equation}
and
\begin{equation*} 
M_{g}e_x = e_{g(x)}
\end{equation*}
for $x \in V$.

The action of automorphisms of a graph is naturally extended to the set of symmetric arcs.
Let $g$ be an automorphism of a graph $\G$,
and let $\MC{A} = \MC{A}(\G)$ be the symmetric arc set.
Define $\tilde{g}: \MC{A} \to \MC{A}$ by
$\tilde{g}(\left(x,y)\right) = (g(x), g(y))$.
Clearly, $\tilde{g}$ is a bijection.
Thus, the permutation matrix $N_{\tilde{g}} \in \MB{C}^{\MC{A} \times \MC{A}}$ is defined by $(N_{\tilde{g}})_{a,b} = \delta_{a, \tilde{g}(b)}$,
and it holds that 
\begin{equation*} 
N_{\tilde{g}}^{\top} = N_{\tilde{g}}^{-1} = N_{\tilde{g}^{-1}}
\end{equation*}
as in Equality~(\ref{1121-1}).
Moreover, we have
\begin{equation*} 
N_{\tilde{g}}e_a = e_{\tilde{g}(a)}
\end{equation*}
for $a \in \MC{A}$.
The following statements describe relationships between permutation matrices induced by automorphisms and matrices associated with Grover walks.
See~\cite[Lemmas~3.1 and~3.2]{kubota2025circulant} for the proofs.

\begin{lem} \label{0814-1}
Let $g$ be an automorphism of a graph $\G$.
Then we have
\begin{align}
dN_{\tilde{g}} &= M_g d, \label{0817-1} \\
SN_{\tilde{g}} &= N_{\tilde{g}} S. \label{0817-2}
\end{align}
\end{lem}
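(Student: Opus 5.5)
We verify both identities entrywise, using only the definitions of $d$, $S$, $M_g$, $N_{\tilde{g}}$ and the fact that $g$ is a degree-preserving bijection. The plan is to compute the $(x,b)$ entry of each side of~(\ref{0817-1}) and the $(a,b)$ entry of each side of~(\ref{0817-2}), and to reduce each pair to the same Kronecker delta.

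For~(\ref{0817-1}), fix $x \in V$ and $b \in \MC{A}$. On the left,
\[ (dN_{\tilde{g}})_{x,b} = \sum_{a \in \MC{A}} d_{x,a} \delta_{a,\tilde{g}(b)} = d_{x,\tilde{g}(b)} = \frac{1}{\sqrt{\deg x}} \delta_{x, t(\tilde{g}(b))} = \frac{1}{\sqrt{\deg x}} \delta_{x, g(t(b))}, \]
since $t(\tilde{g}(b)) = g(t(b))$ by the definition of $\tilde{g}$. On the right,
\[ (M_g d)_{x,b} = \sum_{y \in V} \delta_{x,g(y)} d_{y,b} = d_{g^{-1}(x),b} = \frac{1}{\sqrt{\deg g^{-1}(x)}} \delta_{g^{-1}(x), t(b)}. \]
Because $g$ is bijective, $\delta_{g^{-1}(x),t(b)} = \delta_{x,g(t(b))}$. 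The one point needing care is the normalization: $\deg g^{-1}(x) = \deg x$. This holds because $g$ maps the neighbourhood of $g^{-1}(x)$ bijectively onto the neighbourhood of $x$, which is immediate from the definition of an automorphism. The two entries therefore agree.

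For~(\ref{0817-2}), $S N_{\tilde{g}}$ and $N_{\tilde{g}} S$ are both permutation matrices, so it suffices to compare their actions on the standard basis. We have
\[ SN_{\tilde{g}} e_b = S e_{\tilde{g}(b)} = e_{\tilde{g}(b)^{-1}}, \qquad N_{\tilde{g}} S e_b = N_{\tilde{g}} e_{b^{-1}} = e_{\tilde{g}(b^{-1})}. \]
Here we used $S e_c = e_{c^{-1}}$, which follows from $S_{a,c} = \delta_{a,c^{-1}}$. Writing $b = (x,y)$, both $\tilde{g}(b)^{-1}$ and $\tilde{g}(b^{-1})$ equal $(g(y),g(x))$, so the two sides coincide.

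No step is a genuine obstacle. The only points to watch are the index conventions (that $M_g$ sends $e_y$ to $e_{g(y)}$, so the sum picks out $y = g^{-1}(x)$) and the invariance of degree under automorphisms.
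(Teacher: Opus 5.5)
Your proof is correct. Both identities follow directly from the definitions, and the one nontrivial input, that $\deg g^{-1}(x)=\deg x$ because automorphisms preserve neighbourhoods, is handled properly. The paper does not prove this lemma itself but defers to \cite[Lemmas~3.1 and~3.2]{kubota2025circulant}, so there is no in-paper argument to compare against; your entrywise check of $dN_{\tilde{g}}=M_g d$ and basis-vector check of $SN_{\tilde{g}}=N_{\tilde{g}}S$ supply a complete proof from the paper's own definitions.
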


\section{Quantum search and symmetry of graphs}

In this section, we begin by presenting the quantum search model introduced by Segawa and Yoshie~\cite{segawa2021quantum}.
Let $\G = (V, E)$ be a graph, and let $\MC{A} = \MC{A}(\G)$ denote the set of symmetric arcs.
A mapping $\sigma: \MC{A} \to \{\pm 1\}$ is said to be a {\it sign function} if $\sigma(a) = -1$ implies $\sigma(a^{-1}) = 1$ for all $a \in \MC{A}$.
In other words,
it is not allowed for both $a$ and $a^{-1}$ to be assigned negative signs.
An arc $a$ with $\sigma(a) = -1$ is called a {\it marked arc}.
Our objective is to perform a quantum search for marked arcs.
To this end, we use a time evolution matrix that is a slight modification of the boundary matrix introduced in the previous section.
For a sign function $\sigma$, we define the boundary matrix $d_{\sigma} \in \MB{C}^{V \times \MC{A}}$ associated with $\sigma$ by
\[ (d_{\sigma})_{x,a} := \frac{\sigma(a)}{\sqrt{\deg x}}\delta_{x,t(a)}. \]
In addition, we define the time evolution matrix $U_{\sigma} \in \MB{C}^{\MC{A} \times \MC{A}}$ associated with $\sigma$ by
\[ U_{\sigma} := S(2d_{\sigma}^* d_{\sigma} - I). \]
This quantum walk is a special case of the Szegedy walk, and therefore the unitarity of $U_{\sigma}$ is guaranteed.
For the proof, see~\cite[Proposition~1]{segawa2015spectral}.
Yoshie and Yoshino calculate the $(a,b)$-entry of $U_{\sigma}$ for $a,b \in \MC{A}$ as follows~\cite{yoshie2022quantum}:
\begin{equation} \label{0819-1}
(U_{\sigma})_{a,b} = \frac{2\sigma(a^{-1})\sigma(b)}{\deg t(b)} \delta_{o(a), t(b)} - \delta_{a,b^{-1}}.
\end{equation}

Using this time evolution matrix, we describe the quantum search for marked arcs.
Throughout this paper, we take the initial state to be the normalized all-ones vector
\[
\init := \frac{1}{\sqrt{|\MC{A}|}} [1,1,\dots,1]^{\top} \in \MB{C}^{\MC{A}}.
\]
Starting from $\init$, the state after $\tau$ steps is $U_{\sigma}^{\tau}\init$.
If we measure $U_{\sigma}^{\tau}\init$ in the standard basis $\{e_b \mid b \in \MC{A}\}$, then the probability of observing an arc $b \in \MC{A}$ is
\[
|(U_{\sigma}^{\tau}\init)_b|^2.
\]
Accordingly, the success probability at time $\tau$ is
\[
\sum_{\substack{b \in \MC{A} \\ \sigma(b) = -1}} |(U_{\sigma}^{\tau}\init)_b|^2.
\]

In the remainder of this section, we study the relationship between graph symmetries and the success probability of the quantum search. 
We define the {\it oracle matrix} $\MC{O}_{\sigma} \in \MB{C}^{\MC{A} \times \MC{A}}$ by
\[ (\MC{O}_{\sigma})_{a,b} := \sigma(a)\delta_{a,b}. \]

\begin{lem} \label{0812-1}
Let $\sigma$ be a sign function.
Then we have $d_{\sigma} = d \MC{O}_{\sigma}$.
\end{lem}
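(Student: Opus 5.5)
The identity $d_{\sigma} = d\,\MC{O}_{\sigma}$ is an equality of matrices in $\MB{C}^{V \times \MC{A}}$. I will prove it entrywise, by computing the $(x,a)$-entry of the right-hand side for arbitrary $x \in V$ and $a \in \MC{A}$ and comparing it with the definition of $d_{\sigma}$.

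The key step uses the fact that $\MC{O}_{\sigma}$ is diagonal. By the definition of matrix multiplication,
\[
(d\,\MC{O}_{\sigma})_{x,a} = \sum_{b \in \MC{A}} d_{x,b} (\MC{O}_{\sigma})_{b,a} = \sum_{b \in \MC{A}} d_{x,b}\, \sigma(b)\delta_{b,a}.
\]
Only the term $b = a$ survives, so the sum equals $\sigma(a) d_{x,a}$. Substituting the definition $d_{x,a} = \frac{1}{\sqrt{\deg x}}\delta_{x,t(a)}$ gives
\[
(d\,\MC{O}_{\sigma})_{x,a} = \frac{\sigma(a)}{\sqrt{\deg x}}\delta_{x,t(a)},
\]
which is exactly $(d_{\sigma})_{x,a}$ by definition.

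There is no real obstacle here. The only point requiring care is that right multiplication by a diagonal matrix scales columns. Here the columns are indexed by arcs, and the scaling factor $\sigma(a)$ matches the factor that appears in $d_{\sigma}$.
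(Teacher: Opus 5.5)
Your proof is correct and is essentially the paper's argument. You both expand the $(x,a)$-entry of $d\,\mathcal{O}_{\sigma}$ as a sum over arcs, use that $\mathcal{O}_{\sigma}$ is diagonal so only the term indexed by $a$ survives, and recognize the result $\frac{\sigma(a)}{\sqrt{\deg x}}\delta_{x,t(a)}$ as $(d_{\sigma})_{x,a}$.
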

\begin{proof}
Indeed, for a vertex $x$ and an arc $a$, we have
\[ (d \MC{O}_{\sigma})_{x,a}
= \sum_{z \in \MC{A}} d_{x,z} (\MC{O}_{\sigma})_{z,a}
= \sum_{z \in \MC{A}} \frac{1}{\sqrt{\deg x}} \delta_{x, t(z)} \cdot \sigma(z) \delta_{z,a}
= \frac{\sigma(a)}{\sqrt{\deg x}} \delta_{x, t(a)}
= (d_{\sigma})_{x,a},
\]
as claimed.
\end{proof}

In this paper,
we primarily focus on sign functions that mark only a single arc.
For an arc $z \in \MC{A}$, we define the specific sign function $\sigma_z : \MC{A} \to \{\pm 1\}$ by
\[ \sigma_z(w) := \begin{cases}
1 \quad &\text{if $w \neq z$}, \\
-1 \quad &\text{if $w = z$}.
\end{cases} \]
In this case, the success probability at time $\tau$ is
\[
|(U_{\sigma_z}^{\tau}\init)_z|^2.
\]

\begin{lem} \label{0812-2}
Let $\G$ be a graph, and let $a$ and $b$ be arcs of $\G$.
If an automorphism $g$ of $\G$ satisfies $\tilde{g}(a) = b$,
then we have
\[ N_{\tilde{g}}^* \MC{O}_{\sigma_b}N_{\tilde{g}} = \MC{O}_{\sigma_a}. \]
\end{lem}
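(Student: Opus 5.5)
We compare the two diagonal matrices entrywise.

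By the permutation-matrix identities recalled above, $N_{\tilde{g}}^* = N_{\tilde{g}}^\top = N_{\tilde{g}^{-1}}$ and $N_{\tilde{g}} e_c = e_{\tilde{g}(c)}$ for every arc $c$. Hence, for arcs $c,c' \in \MC{A}$,
\[ \bigl(N_{\tilde{g}}^* \MC{O}_{\sigma_b} N_{\tilde{g}}\bigr)_{c,c'} = e_c^\top N_{\tilde{g}}^\top \MC{O}_{\sigma_b} N_{\tilde{g}} e_{c'} = \bigl(N_{\tilde{g}}e_c\bigr)^\top \MC{O}_{\sigma_b}\, e_{\tilde{g}(c')} = (\MC{O}_{\sigma_b})_{\tilde{g}(c),\tilde{g}(c')}. \]
By the definition of the oracle matrix, this equals $\sigma_b(\tilde{g}(c))\,\delta_{\tilde{g}(c),\tilde{g}(c')}$. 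Since $\tilde{g}$ is a bijection, $\delta_{\tilde{g}(c),\tilde{g}(c')} = \delta_{c,c'}$.

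It therefore suffices to show $\sigma_b(\tilde{g}(c)) = \sigma_a(c)$ for every arc $c$. Using the hypothesis $\tilde{g}(a)=b$ and the injectivity of $\tilde{g}$, we have $\tilde{g}(c) = b$ if and only if $c = a$. Thus $\sigma_b(\tilde{g}(c)) = -1$ exactly when $c=a$, and $\sigma_b(\tilde{g}(c)) = 1$ otherwise. This is precisely $\sigma_a(c)$, so the $(c,c')$-entry equals $\sigma_a(c)\delta_{c,c'} = (\MC{O}_{\sigma_a})_{c,c'}$, which proves the claim.

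There is no real obstacle here. The only point needing care is the index bookkeeping: conjugating by $N_{\tilde{g}}$ evaluates the diagonal at $\tilde{g}(c)$, not at $\tilde{g}^{-1}(c)$. That direction is fixed by the convention $(N_{\tilde{g}})_{a,b} = \delta_{a,\tilde{g}(b)}$, equivalently $N_{\tilde{g}}e_c = e_{\tilde{g}(c)}$.
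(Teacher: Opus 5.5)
Your proof is correct and follows the paper's argument essentially verbatim: you use $N_{\tilde{g}}e_c = e_{\tilde{g}(c)}$ to reduce the $(c,c')$-entry to $(\MC{O}_{\sigma_b})_{\tilde{g}(c),\tilde{g}(c')} = \sigma_b(\tilde{g}(c))\,\delta_{c,c'}$. You then use the injectivity of $\tilde{g}$ together with $\tilde{g}(a)=b$ to identify $\sigma_b(\tilde{g}(c))$ with $\sigma_a(c)$, and you also state the step $\delta_{\tilde{g}(c),\tilde{g}(c')} = \delta_{c,c'}$ that the paper leaves implicit.
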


\begin{proof}
For arcs $z$ and $w$, we have
\begin{align*}
(N_{\tilde{g}}^* \MC{O}_{\sigma_b}N_{\tilde{g}})_{z,w}
&= e_z^* (N_{\tilde{g}}^* \MC{O}_{\sigma_b}N_{\tilde{g}}) e_{w}
= (N_{\tilde{g}} e_{z} )^* \MC{O}_{\sigma_b} (N_{\tilde{g}} e_{w}) \\
&= e_{\tilde{g}(z)}^* \MC{O}_{\sigma_b} e_{\tilde{g}(w)} 
= (\MC{O}_{\sigma_b})_{\tilde{g}(z), \tilde{g}(w)}
= \sigma_b(\tilde{g}(z)) \delta_{z,w}.
\end{align*}
In addition, 
\begin{align*}
\sigma_b(\tilde{g}(z)) = -1
&\iff \tilde{g}(z) = b = \tilde{g}(a) \\
&\iff z = a \\
&\iff \sigma_a(z) = -1,
\end{align*}
which implies
\[ \sigma_b(\tilde{g}(z)) \delta_{z,w}
= \sigma_a(z) \delta_{z,w} = (\MC{O}_{\sigma_a})_{z,w}, \]
as claimed.
\end{proof}

\begin{thm} \label{0813-1}
Let $\G$ be a graph, and let $a$ and $b$ be arcs of $\G$.
If an automorphism $g$ of $\G$ satisfies $\tilde{g}(a) = b$,
then we have
\[ N_{\tilde{g}}^* U_{\sigma_b}N_{\tilde{g}} = U_{\sigma_a}. \]
\end{thm}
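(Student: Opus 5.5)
The plan is a direct matrix computation that factors $U_{\sigma_b}$ into pieces whose behaviour under conjugation by $N_{\tilde{g}}$ is already known. By Lemma~\ref{0812-1},
\[ 2d_{\sigma_b}^* d_{\sigma_b} - I = 2\MC{O}_{\sigma_b}^* d^* d\, \MC{O}_{\sigma_b} - I, \]
and $\MC{O}_{\sigma_b}$ is real diagonal, so $\MC{O}_{\sigma_b}^* = \MC{O}_{\sigma_b}$. Hence
\[ U_{\sigma_b} = S\bigl(2\MC{O}_{\sigma_b} d^* d\, \MC{O}_{\sigma_b} - I\bigr). \]
We then conjugate by $N_{\tilde{g}}$. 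Since $N_{\tilde{g}}$ is a real permutation matrix, we have $N_{\tilde{g}}^* = N_{\tilde{g}}^{-1}$, and we can insert $N_{\tilde{g}} N_{\tilde{g}}^* = I$ between every pair of factors.

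The three ingredients are as follows.

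\textbf{Shift.} By Equality~(\ref{0817-2}), $N_{\tilde{g}}^* S N_{\tilde{g}} = S$.

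\textbf{Oracle.} By Lemma~\ref{0812-2}, $N_{\tilde{g}}^* \MC{O}_{\sigma_b} N_{\tilde{g}} = \MC{O}_{\sigma_a}$.

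\textbf{Boundary.} We need $N_{\tilde{g}}^* d^* d N_{\tilde{g}} = d^* d$. From Equality~(\ref{0817-1}), $dN_{\tilde{g}} = M_g d$. Taking adjoints gives $N_{\tilde{g}}^* d^* = d^* M_g^*$. Therefore
\[ N_{\tilde{g}}^* d^* d N_{\tilde{g}} = d^* M_g^* M_g d = d^* d, \]
using $M_g^* M_g = I$, which follows from Equality~(\ref{1121-1}) since $M_g$ is a real permutation matrix.

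Assembling the pieces, we get
\[ N_{\tilde{g}}^* U_{\sigma_b} N_{\tilde{g}} = (N_{\tilde{g}}^* S N_{\tilde{g}})\bigl(2 (N_{\tilde{g}}^*\MC{O}_{\sigma_b}N_{\tilde{g}})(N_{\tilde{g}}^* d^*dN_{\tilde{g}})(N_{\tilde{g}}^*\MC{O}_{\sigma_b}N_{\tilde{g}}) - N_{\tilde{g}}^* N_{\tilde{g}}\bigr). \]
Substituting the three identities turns this into $S(2\MC{O}_{\sigma_a} d^* d\, \MC{O}_{\sigma_a} - I)$. By Lemma~\ref{0812-1} applied to $\sigma_a$, this equals $U_{\sigma_a}$.

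There is no real obstacle here. The only point requiring care is the boundary-matrix step: $d^* d$ must be invariant under conjugation, which needs the adjoint of Equality~(\ref{0817-1}) together with the orthogonality of $M_g$. The rest is bookkeeping of where the identities $N_{\tilde{g}}N_{\tilde{g}}^* = I$ are inserted. As a cross-check, one could also verify the result entrywise using Equality~(\ref{0819-1}), since $\tilde{g}$ preserves origins, termini, inverses and degrees.
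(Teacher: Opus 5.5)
Your proof is correct and uses the same ingredients as the paper's proof: Lemma~\ref{0812-1}, Lemma~\ref{0812-2}, and Equalities~\eqref{0817-1} and~\eqref{0817-2}. The paper only organizes them differently, pushing $N_{\tilde{g}}$ through the product to show $U_{\sigma_b}N_{\tilde{g}} = N_{\tilde{g}}U_{\sigma_a}$ rather than conjugating factor by factor; your handling of the boundary term as $(dN_{\tilde{g}})^*(dN_{\tilde{g}}) = d^*M_g^*M_g d = d^*d$ is slightly more transparent than the paper's second appeal to~\eqref{0817-1}, which tacitly uses the adjoint form $d^*M_g = N_{\tilde{g}}d^*$.
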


\begin{proof}
We will show that $U_{\sigma_b}N_{\tilde{g}} = N_{\tilde{g}} U_{\sigma_a}$.
We have
\begin{align*}
U_{\sigma_b}N_{\tilde{g}}
&= S(2d^*_{\sigma_b}d_{\sigma_b} - I) N_{\tilde{g}} \\
&= 2S \MC{O}_{\sigma_b} d^* d \MC{O}_{\sigma_b} N_{\tilde{g}} - S N_{\tilde{g}} \tag{by Lemma~\ref{0812-1}} \\
&= 2S \MC{O}_{\sigma_b} d^* d N_{\tilde{g}} \MC{O}_{\sigma_a} -  N_{\tilde{g}}S \tag{by Lemma~\ref{0812-2} and~\eqref{0817-2}} \\
&= 2S \MC{O}_{\sigma_b} d^* M_{g} d \MC{O}_{\sigma_a} -  N_{\tilde{g}}S \tag{by~\eqref{0817-1}} \\
&= 2S \MC{O}_{\sigma_b} N_{\tilde{g}} d^*  d \MC{O}_{\sigma_a} -  N_{\tilde{g}}S \tag{by~\eqref{0817-1}} \\
&= 2S N_{\tilde{g}} \MC{O}_{\sigma_a} d^*  d \MC{O}_{\sigma_a} -  N_{\tilde{g}}S \tag{by Lemma~\ref{0812-2}} \\
&= 2 N_{\tilde{g}} S \MC{O}_{\sigma_a} d^*  d \MC{O}_{\sigma_a} -  N_{\tilde{g}}S \tag{by~\eqref{0817-2}} \\
&= N_{\tilde{g}}S (2 d^*_{\sigma_a} d_{\sigma_a} - I) \tag{by Lemma~\ref{0812-1}} \\
&= N_{\tilde{g}} U_{\sigma_a},
\end{align*}
as claimed.
\end{proof}

By applying Theorem~\ref{0813-1},
we can show that if there exists an automorphism that maps an arc $a$ to an arc $b$, then the probabilities of finding $a$ and $b$ are equal.

\begin{cor}
Let $\G$ be a graph, and let $a$ and $b$ be arcs of $\G$.
If an automorphism $g$ of $\G$ satisfies $\tilde{g}(a) = b$,
then for any positive integer $\tau$, we have
\[ (U_{\sigma_a}^{\tau} \init)_a = (U_{\sigma_b}^{\tau} \init)_b. \]
In particular, 
\[ |(U_{\sigma_a}^{\tau} \init)_a|^2 = |(U_{\sigma_b}^{\tau} \init)_b|^2. \]
\end{cor}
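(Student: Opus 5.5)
The plan is to iterate Theorem~\ref{0813-1} to powers and then use that the initial state $\init$ is fixed by every arc permutation matrix. Since $N_{\tilde{g}}$ is a permutation matrix, it is unitary, so $N_{\tilde{g}}^* = N_{\tilde{g}}^{-1}$. Theorem~\ref{0813-1} therefore says that $U_{\sigma_a}$ and $U_{\sigma_b}$ are conjugate via $N_{\tilde{g}}$. Taking $\tau$-th powers, the inner factors $N_{\tilde{g}} N_{\tilde{g}}^*$ cancel (formally, by induction on $\tau$), and we get
\[ N_{\tilde{g}}^* U_{\sigma_b}^{\tau} N_{\tilde{g}} = U_{\sigma_a}^{\tau} \]
for every positive integer $\tau$.

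Next, I will use that $N_{\tilde{g}}\init = \init$. Indeed, $N_{\tilde{g}}$ only permutes coordinates, and all coordinates of $\init$ are equal to $1/\sqrt{|\MC{A}|}$. Equivalently, $N_{\tilde{g}} e_w = e_{\tilde{g}(w)}$, and summing over $w \in \MC{A}$ reindexes the sum because $\tilde{g}$ is a bijection of $\MC{A}$.

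Then I compute the $a$-entry:
\[ (U_{\sigma_a}^{\tau}\init)_a = e_a^* N_{\tilde{g}}^* U_{\sigma_b}^{\tau} N_{\tilde{g}} \init = (N_{\tilde{g}} e_a)^* U_{\sigma_b}^{\tau} \init = e_{\tilde{g}(a)}^* U_{\sigma_b}^{\tau}\init = (U_{\sigma_b}^{\tau}\init)_b. \]
Taking absolute values squared gives the statement about success probabilities.

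No step is a real obstacle. The only points to check are the two facts used above: the unitarity of $N_{\tilde{g}}$ (so that the conjugation telescopes when taking powers), and the invariance of $\init$ under $N_{\tilde{g}}$.
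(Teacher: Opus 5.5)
Your proposal is correct and follows the paper's proof essentially verbatim: you conjugate the $\tau$-th power via Theorem~\ref{0813-1}, use $N_{\tilde{g}}\init = \init$, and read off the $a$-entry as the $\tilde{g}(a) = b$ entry. Your explicit remark that the conjugation telescopes to powers because $N_{\tilde{g}}N_{\tilde{g}}^* = I$ just spells out a step the paper leaves implicit.
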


\begin{proof}
By Theorem~\ref{0813-1}, we have
\[
(U_{\sigma_a}^{\tau} \, \bm{\mathrm{j}})_a
= e_a^* (U_{\sigma_a}^{\tau} \, \bm{\mathrm{j}})
= e_a^* (N_{\tilde{g}}^* U_{\sigma_b}^{\tau} N_{\tilde{g}} \, \bm{\mathrm{j}})
= (N_{\tilde{g}} e_a)^* U_{\sigma_b}^{\tau} (N_{\tilde{g}} \, \bm{\mathrm{j}}).
\]
Since $N_{\tilde{g}}$ is a permutation matrix,
it follows that $N_{\tilde{g}} \, \init = \init$.
Hence,
\[ (U_{\sigma_a}^{\tau} \, \bm{\mathrm{j}})_a
= e_{\tilde{g}(a)}^* U_{\sigma_b}^{\tau} \, \bm{\mathrm{j}}
= e_{b}^* (U_{\sigma_b}^{\tau} \, \bm{\mathrm{j}})
= (U_{\sigma_b}^{\tau} \init)_b,
\]
as claimed.
\end{proof}

For a graph $\G$,
the automorphism group $\Aut(\G)$ acts on the symmetric arc set $\MC{A}$ via the rule $(g, a) \mapsto \tilde{g}(a)$,
where $g \in \Aut(\G)$ and $a \in \MC{A}$.
If this action has only one orbit,
the graph $\G$ is said to be {\it arc-transitive}.

\begin{cor} \label{0820-1}
Let $\G$ be a graph,
and let $\MC{A}_1, \dots, \MC{A}_t$ be the orbits of $\MC{A}$ under the action of $\Aut(\G)$.
If $a, b \in \MC{A}_i$ for some $i$,
then
\[ |(U_{\sigma_a}^{\tau} \init)_a|^2 = |(U_{\sigma_b}^{\tau} \init)_b|^2. \]
In particular, if $\G$ is arc-transitive,
then for any arcs $a$ and $b$, we have
\[ |(U_{\sigma_a}^{\tau} \init)_a|^2 = |(U_{\sigma_b}^{\tau} \init)_b|^2. \]
\end{cor}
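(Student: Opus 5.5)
This follows directly from the preceding corollary, so the plan is a short reduction to it. Suppose $a, b \in \MC{A}_i$ for some $i$. By the definition of an orbit of the action $(g, a) \mapsto \tilde{g}(a)$ of $\Aut(\G)$ on $\MC{A}$, there exists an automorphism $g \in \Aut(\G)$ with $\tilde{g}(a) = b$. Applying the previous corollary to this $g$ gives
\[ (U_{\sigma_a}^{\tau} \init)_a = (U_{\sigma_b}^{\tau} \init)_b \]
for every positive integer $\tau$. Taking squared moduli yields the first assertion.

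The case $\tau = 0$, if it is to be included, is trivial: both sides equal $1/|\MC{A}|$, since $\init$ has all entries equal to $1/\sqrt{|\MC{A}|}$.

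For the second assertion, arc-transitivity means by definition that the action has exactly one orbit, namely $\MC{A}$ itself. Hence any two arcs $a$ and $b$ lie in the same orbit, and the first assertion applies to them.

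There is no real obstacle. The only point requiring care is to note that membership in a common orbit is exactly the existence of an automorphism $g$ with $\tilde{g}(a) = b$, which is the hypothesis of the previous corollary. All the substantive work, namely the conjugation identity of Theorem~\ref{0813-1} and the fact that $N_{\tilde{g}}\init = \init$, has already been done there.
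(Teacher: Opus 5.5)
Your proposal is correct and matches the paper's intended argument. The paper states this corollary without a separate proof, as an immediate consequence of the preceding corollary once one notes that lying in a common orbit means there is $g \in \Aut(\G)$ with $\tilde{g}(a) = b$; your remark on $\tau = 0$ is a harmless addition.
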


Of course, the probabilities of discovering arcs from different orbits are generally not equal.

\section{Arc search on cycle graphs and path graphs}

In this section, we show that the quantum search model defined in the previous section does not perform well on path graphs and cycle graphs.
An intuitive reason for this is that each vertex in these graphs has degree at most $2$, so no interference of probability amplitudes arises from superposition, and the probability of finding a marked arc remains constant over time.

\begin{pro}
Let $a$ be an arc of $\MC{A} := \MC{A}(C_n)$,
and let $U_{\sigma_a} := U_{\sigma_a}(C_n)$.
Then for any non-negative integer $\tau$ and any arc $w$,
we have
\[ (U_{\sigma_a}^{\tau} \init)_w = \pm \frac{1}{\sqrt{2n}}. \]
In particular,
\[ |(U_{\sigma_a}^{\tau} \init)_a|^2 = \frac{1}{2n}. \]
\end{pro}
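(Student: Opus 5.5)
Since every vertex of $C_n$ has degree $2$, the plan is to show that $U_{\sigma_a}$ is a \emph{signed permutation matrix}, i.e.\ each row and column has exactly one nonzero entry, equal to $\pm 1$. The claim then follows at once by induction on $\tau$. For $\tau=0$, every entry of $\init$ equals $1/\sqrt{2n}$, because $|\MC{A}(C_n)|=2n$. If $U_{\sigma_a}^{\tau}\init$ has all entries equal to $\pm 1/\sqrt{2n}$, then each entry of $U_{\sigma_a}^{\tau+1}\init$ is $\pm 1$ times a single entry of $U_{\sigma_a}^{\tau}\init$, hence again $\pm 1/\sqrt{2n}$. Taking $w=a$ and squaring gives the success probability $1/(2n)$.

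To verify the signed permutation structure, I use the entry formula \eqref{0819-1} with $\deg t(b)=2$:
\[ (U_{\sigma_a})_{w,b} = \sigma_a(w^{-1})\sigma_a(b)\,\delta_{o(w),t(b)} - \delta_{w,b^{-1}}. \]
Fix a column $b=(x,y)$. The arcs $w$ with $o(w)=y$ are exactly $b^{-1}=(y,x)$ and $w'=(y,z)$, where $z$ is the other neighbour of $y$. I treat the two candidate rows separately.
\begin{itemize}
\item Row $w=b^{-1}$: here $w^{-1}=b$, so the entry is $\sigma_a(b)^2-1=0$.
\item Row $w=w'$: here $w'\neq b^{-1}$, so the entry is $\sigma_a(w'^{-1})\sigma_a(b)\in\{\pm1\}$.
\end{itemize}
All other entries of the column vanish. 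Hence column $b$ has exactly one nonzero entry, $\pm1$, located at the arc $w'$ that continues $b$ forward around the cycle. The map $b\mapsto w'$ is a bijection on $\MC{A}$, since it is rotation of directed arcs along the cycle. Therefore $U_{\sigma_a}$ is a permutation matrix with signs, as required.

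The only delicate point is the cancellation in the row $w=b^{-1}$. It requires $\sigma_a(b)^2=1$, which holds because signs are $\pm1$, together with $\deg=2$ making the coefficient $2/\deg$ equal to $1$. I would also check $n\ge3$, so that $z\neq x$ and the two arcs out of $y$ are distinct; this holds for simple cycles. Everything else is routine.
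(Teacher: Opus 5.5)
Your proof is correct and is essentially the paper's argument. Both run an induction on $\tau$ using \eqref{0819-1}, where the backtracking term cancels because $\sigma_a^2=1$ and $2/\deg=1$, leaving exactly one $\pm1$ entry. The only difference is that you package this as ``$U_{\sigma_a}$ is a signed permutation matrix'' checked column by column, which requires your extra bijection $b\mapsto w'$, whereas the paper checks row $w$ directly inside the induction and so needs no bijection.
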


\begin{proof}
We prove the statement by induction on $\tau$.
The case $\tau = 0$ is obvious.
Assume that the statement holds for $\tau - 1$.
Then we have
\begin{align*}
(U_{\sigma_a}^{\tau} \init)_w
&= (U_{\sigma_a} U_{\sigma_a}^{\tau-1} \init)_w
= \sum_{z \in \MC{A}} (U_{\sigma_a})_{w,z} (U_{\sigma_a}^{\tau-1} \init)_{z} \\
&= \frac{1}{\sqrt{2n}} \sum_{z \in \MC{A}} \varepsilon(z) \left\{ \frac{2 \sigma_{a}(w^{-1})\sigma_{a}(z)}{2} \delta_{o(w), t(z)} - \delta_{w, z^{-1}} \right\}, \tag{by~\eqref{0819-1}}
\end{align*}
where $\varepsilon(z) \in \{\pm 1\}$.
Since $\deg o(w) = 2$,
there are exactly two arcs whose terminus is $o(w)$.
Let $w'$ denote the one that is not $w^{-1}$.
Then,
\begin{align*}
&\sum_{z \in \MC{A}} \varepsilon(z) \left\{ \frac{2 \sigma_{a}(w^{-1})\sigma_{a}(z)}{2} \delta_{o(w), t(z)} - \delta_{w, z^{-1}} \right\} \\
= \, \,  & \varepsilon(w^{-1}) \left( \sigma_{a}(w^{-1}) \sigma_{a}(w^{-1}) - 1 \right) + \varepsilon(w')\sigma_{a}(w^{-1}) \sigma_{a}(w') \\
= \, \,  & \varepsilon(w')\sigma_{a}(w^{-1}) \sigma_{a}(w'),
\end{align*}
and hence
\[ (U_{\sigma_a}^{\tau} \init)_w
= \frac{1}{\sqrt{2n}} \cdot \varepsilon(w') \sigma_{a}(w^{-1}) \sigma_{a}(w') = \pm \frac{1}{\sqrt{2n}}. \]
This completes the induction.
\end{proof}

A similar statement can be proved for path graphs, with minor modifications.

\begin{pro}
Let $a$ be an arc of $\MC{A} := \MC{A}(P_n)$,
and let $U_{\sigma_a} := U_{\sigma_a}(P_n)$.
Then for any non-negative integer $\tau$ and any arc $w$,
we have
\[ (U_{\sigma_a}^{\tau} \init)_w = \pm \frac{1}{\sqrt{2n-2}}. \]
In particular,
\[ |(U_{\sigma_a}^{\tau} \init)_a|^2 = \frac{1}{2n-2}. \]
\end{pro}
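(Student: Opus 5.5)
We argue by induction on $\tau$, exactly as in the proof for $C_n$. Note that $|\MC{A}(P_n)| = 2(n-1) = 2n-2$, so the base case $\tau=0$ is immediate from the definition of $\init$. For the induction step we assume $(U_{\sigma_a}^{\tau-1}\init)_z = \varepsilon(z)/\sqrt{2n-2}$ with $\varepsilon(z)\in\{\pm1\}$ for every arc $z$. We then expand $(U_{\sigma_a}^{\tau}\init)_w = \sum_{z} (U_{\sigma_a})_{w,z}(U_{\sigma_a}^{\tau-1}\init)_z$ using the entry formula~\eqref{0819-1}. The only arcs $z$ that contribute are those with $t(z) = o(w)$; among them, the arc $z = w^{-1}$ also carries the term $-\delta_{w,z^{-1}}$.

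The difference from the cycle case is that $\deg o(w)\in\{1,2\}$, so we split into two cases. If $\deg o(w) = 2$, the vertex $o(w)$ is internal, and the computation is verbatim the one for $C_n$. There are exactly two arcs into $o(w)$, namely $w^{-1}$ and some $w'$. The $w^{-1}$ term gives $\varepsilon(w^{-1})(\sigma_a(w^{-1})^2 - 1) = 0$, because $\sigma_a$ takes values in $\{\pm1\}$. What remains is $\varepsilon(w')\sigma_a(w^{-1})\sigma_a(w')/\sqrt{2n-2}$, which is $\pm 1/\sqrt{2n-2}$.

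If $\deg o(w) = 1$, then $o(w)$ is an endpoint of the path, and the only arc into $o(w)$ is $w^{-1}$. In this case~\eqref{0819-1} gives $(U_{\sigma_a})_{w,w^{-1}} = 2\sigma_a(w^{-1})^2/1 - 1 = 1$, and every other entry in row $w$ vanishes. Hence $(U_{\sigma_a}^{\tau}\init)_w = \varepsilon(w^{-1})/\sqrt{2n-2}$, which again has the required form. Together the two cases complete the induction. Taking $w = a$ and squaring then gives $|(U_{\sigma_a}^{\tau}\init)_a|^2 = 1/(2n-2)$.

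The only step needing care is the degree-one endpoint case. There the factor $2/\deg t(b)$ in~\eqref{0819-1} equals $2$ instead of $1$, and one must check that the diagonal-type entry still has modulus one rather than cancelling to $0$ as in the degree-two case. Apart from that, the argument should be a routine adaptation of the $C_n$ proof. We also exclude the degenerate case $n=1$, where there are no arcs, and implicitly assume $n\ge 2$.
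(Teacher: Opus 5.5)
Your proof is correct and matches the paper's argument essentially step for step. Both use induction on $\tau$, reduce the case $\deg o(w)=2$ to the $C_n$ computation, and note that when $\deg o(w)=1$ the only nonzero entry in row $w$ is $(U_{\sigma_a})_{w,w^{-1}} = 2\sigma_a(w^{-1})^2 - 1 = 1$, which gives $\varepsilon(w^{-1})/\sqrt{2n-2}$.
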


\begin{proof}
We prove the statement by induction on $\tau$.
The case $\tau = 0$ is obvious.
Assume that the statement holds for $\tau - 1$.
Then we have
\begin{equation} \label{0818-1}
(U_{\sigma_a}^{\tau} \init)_w
= \frac{1}{\sqrt{2n-2}} \sum_{z \in \MC{A}} \varepsilon(z) \left\{ \frac{2 \sigma_{a}(w^{-1})\sigma_{a}(z)}{\deg o(w)} \delta_{o(w), t(z)} - \delta_{w, z^{-1}} \right\},
\end{equation}
where $\varepsilon(z) \in \{\pm 1\}$.
We consider path graphs, so $\deg o(w) \in \{1,2\}$.
If $\deg o(w) = 2$, the claim can be proved by the same calculation as in the case of cycle graphs.
If $\deg o(w) = 1$, the only arc whose terminus is $o(w)$ is $w^{-1}$.
Thus, the right-hand side of Equality~\eqref{0818-1} becomes
\[ \frac{1}{\sqrt{2n-2}} \varepsilon(w^{-1}) \left( \frac{2 \sigma_{a}(w^{-1})\sigma_{a}(w^{-1})}{1} - 1 \right)
= \frac{\varepsilon(w^{-1})}{\sqrt{2n-2}}, \]
which completes the induction.
\end{proof}

\section{Arc search on complete bipartite graphs} \label{0918-1}

In this section, we consider the problem of finding a single arc in the complete bipartite graph $K_{n,n}$.
This study does not focus on the complete graphs, since their analysis has already been thoroughly carried out in the work of Segawa and Yoshie~\cite{segawa2021quantum} on edge search.
Therefore, the behavior of arc search on complete graphs can be regarded as essentially understood, because it is expected to show the same dynamics as the edge search analyzed in~\cite{segawa2021quantum}.
A complete bipartite graph is arc-transitive when the sizes of its partite sets are equal, so by Corollary~\ref{0820-1} we may, without loss of generality, assume that an arbitrary arc is marked.
In contrast to the previous section, we will see that quantum search is effective on complete bipartite graphs.

To evaluate the success probability, it is necessary to obtain the eigenvalues and eigenvectors of the time evolution matrix.
However, as we will see below, this task can be reduced to finding the eigenvalues and eigenvectors of edge-signed graphs.

Let $\G = (V, E)$ be a graph with a sign function $\sigma$.
We define the {\it discriminant matrix} $T_{\sigma} = T_{\sigma}(\G) \in \MB{C}^{V \times V}$ by
\[ T_{\sigma} := d_{\sigma} S d_{\sigma}^*. \]

\begin{thm}[{\cite[Proposition~1]{higuchi2014spectral}}]
Let $\G$ be a graph with a sign function $\sigma$,
and let $U_{\sigma} := U_{\sigma}(\G)$ and $T_{\sigma} := T_{\sigma}(\G)$.
Then it holds that
\[ \{ e^{\pm i \theta_{\lambda}} \mid \lambda \in \Spec(T_{\sigma}) \} \subset \Spec(U_{\sigma}), \]
where $\theta_{\lambda} = \cos^{-1} \lambda$.
In addition, the normalized eigenvector of $U_{\sigma}$ associated with $e^{\pm i \theta_{\lambda}}$ is given by
\[ \varphi_{\pm \lambda}
= \frac{1}{\sqrt{2}|\sin \theta_{\lambda}|} (d_{\sigma}^* - e^{\pm i \theta_{\lambda}} S d_{\sigma}^*) \BM{f}, \]
where $\BM{f}$ is the normalized eigenvector of $T_{\sigma}$ associated with $\lambda$.
\end{thm}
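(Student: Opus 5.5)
The plan is to verify directly that $\varphi_{\pm\lambda}$ is an eigenvector of $U_{\sigma}$ with eigenvalue $e^{\pm i\theta_\lambda}$, and then compute its norm. Write $\mu := e^{\pm i\theta_\lambda}$, so that $\mu + \mu^{-1} = 2\lambda$. Everything reduces to the identities $d_\sigma d_\sigma^* = I$, $S^2 = I$, and $T_\sigma \BM{f} = \lambda \BM{f}$.

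\textbf{Step 1: $d_\sigma d_\sigma^* = I$.} By Lemma~\ref{0812-1} we have $d_\sigma = d\MC{O}_\sigma$. Since $\MC{O}_\sigma$ is a real diagonal $\pm 1$ matrix, $\MC{O}_\sigma\MC{O}_\sigma^* = I$, and hence $d_\sigma d_\sigma^* = d d^* = I$.

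\textbf{Step 2: images of the two building blocks.} Set $\psi := (d_\sigma^* - \mu S d_\sigma^*)\BM{f}$. Using Step 1,
\[ U_\sigma d_\sigma^*\BM{f} = S(2d_\sigma^* d_\sigma d_\sigma^* - d_\sigma^*)\BM{f} = S d_\sigma^*\BM{f}. \]
Using the definition of $T_\sigma$,
\[ U_\sigma S d_\sigma^* \BM{f} = S(2 d_\sigma^* T_\sigma \BM{f} - S d_\sigma^*\BM{f}) = 2\lambda S d_\sigma^*\BM{f} - d_\sigma^*\BM{f}. \]

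\textbf{Step 3: eigen-equation.} Combining these,
\[ U_\sigma\psi = \mu d_\sigma^*\BM{f} + (1 - 2\lambda\mu) S d_\sigma^*\BM{f}. \]
From $\mu^2 - 2\lambda\mu + 1 = 0$ we get $1 - 2\lambda\mu = -\mu^2$. Therefore $U_\sigma\psi = \mu(d_\sigma^*\BM{f} - \mu S d_\sigma^*\BM{f}) = \mu\psi$.

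\textbf{Step 4: normalization.} Expand $\|\psi\|^2$ using $\|d_\sigma^*\BM{f}\|^2 = \BM{f}^* d_\sigma d_\sigma^*\BM{f} = 1$ and $\|S d_\sigma^*\BM{f}\|^2 = 1$ (since $S$ is a real symmetric involution). For the cross terms, note that $T_\sigma = d_\sigma S d_\sigma^*$ is Hermitian because $S$ is real symmetric and $\sigma$ is real, so $\lambda$ is real and $\langle d_\sigma^*\BM{f}, S d_\sigma^*\BM{f}\rangle = \BM{f}^* T_\sigma \BM{f} = \lambda$. Hence
\[ \|\psi\|^2 = 2 - (\mu + \bar\mu)\lambda = 2 - 2\lambda^2 = 2\sin^2\theta_\lambda, \]
where we used $\mu + \bar\mu = 2\cos\theta_\lambda = 2\lambda$. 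This gives the stated normalizing factor $1/(\sqrt{2}|\sin\theta_\lambda|)$.

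\textbf{Main obstacle.} The delicate point is the degenerate case $\lambda = \pm 1$, where $\sin\theta_\lambda = 0$. There $\psi = 0$ and the formula breaks down, so the eigenvector claim only makes sense for $|\lambda| < 1$. The spectral inclusion still holds at $\lambda = \pm 1$, but it needs a separate witness. For $\lambda = \pm1$, equality holds in Cauchy--Schwarz, $|\langle d_\sigma^*\BM{f}, S d_\sigma^*\BM{f}\rangle| \le 1$, which forces $S d_\sigma^*\BM{f} = \lambda d_\sigma^*\BM{f}$. Step 2 then gives $U_\sigma d_\sigma^*\BM{f} = S d_\sigma^*\BM{f} = \lambda d_\sigma^*\BM{f}$, with $d_\sigma^*\BM{f} \neq 0$. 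So $e^{\pm i\theta_\lambda} = \lambda$ is an eigenvalue with eigenvector $d_\sigma^*\BM{f}$. I would state the normalized eigenvector formula only for $\lambda \neq \pm 1$ and treat this case separately.
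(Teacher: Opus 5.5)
The paper gives no proof of this statement; it quotes it from \cite[Proposition~1]{higuchi2014spectral}, so there is no in-paper argument to compare against. Your direct verification is correct and is the standard argument. Each step checks out:
$d_\sigma d_\sigma^* = I$ via Lemma~\ref{0812-1};
the invariance $U_\sigma d_\sigma^*\BM{f} = Sd_\sigma^*\BM{f}$;
the relation $U_\sigma Sd_\sigma^*\BM{f} = 2\lambda Sd_\sigma^*\BM{f} - d_\sigma^*\BM{f}$;
the identity $1 - 2\lambda\mu = -\mu^2$;
and the norm $\|\psi\|^2 = 2 - 2\lambda^2$.

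What your version adds is the case $\lambda = \pm 1$. There the quoted formula divides by $|\sin\theta_\lambda| = 0$ and the unnormalized vector vanishes, so the eigenvector claim only makes sense for $|\lambda| < 1$. Your Cauchy--Schwarz argument handles this correctly: it gives $Sd_\sigma^*\BM{f} = \lambda d_\sigma^*\BM{f}$, hence $U_\sigma d_\sigma^*\BM{f} = \lambda d_\sigma^*\BM{f}$, which recovers the spectral inclusion. This does not affect how the paper uses the theorem. For $K_{n,n}$ with $n \geq 3$, the bounds $(n+1)^2 < \D < (n+2)^2$ give $0 < \lambda < 1$.

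One ordering fix: Step 4 already uses $|\mu| = 1$ and $\bar\mu = \mu^{-1}$, that is, that $\theta_\lambda$ is real. This requires $|\lambda| \leq 1$. It follows from the same bound $|\lambda| = |\langle d_\sigma^*\BM{f}, Sd_\sigma^*\BM{f}\rangle| \leq 1$ that you only invoke at the end, so state it up front. Also, $T_\sigma$ being Hermitian needs only $S^* = S$; the reality of $\sigma$ plays no role there.
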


In fact, the discriminant matrix $T_{\sigma}$ appearing in the above theorem can be interpreted as the normalized adjacency matrix of an edge-signed graph:

\begin{lem} \label{1016-1}
Let $\G = (V, E)$ be a graph with a sign function $\sigma$,
and let $T_{\sigma} := T_{\sigma}(\G)$.
Then we have
\[ (T_{\sigma})_{x,y} = \begin{cases}
\frac{\sigma(x,y)\sigma(y,x)}{\sqrt{\deg x} \sqrt{\deg y}} \quad &\text{if $\{x,y\} \in E$,} \\
0, \quad &\text{otherwise.}
\end{cases} \]
for $x, y \in V$.
\end{lem}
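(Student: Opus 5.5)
The plan is a direct entrywise computation from the definitions of $d_{\sigma}$ and $S$. For $x, y \in V$ we expand the product as a double sum over arcs:
\[ (T_{\sigma})_{x,y} = \sum_{a, b \in \MC{A}} (d_{\sigma})_{x,a} S_{a,b} \overline{(d_{\sigma})_{y,b}}. \]
Because $\sigma$ is real-valued, the conjugation has no effect.

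First, we use $S_{a,b} = \delta_{a,b^{-1}}$ to collapse the sum over $b$, leaving the single term $b = a^{-1}$. Next, we substitute the definition of $d_{\sigma}$. This gives
\[ (T_{\sigma})_{x,y} = \sum_{a \in \MC{A}} \frac{\sigma(a)\sigma(a^{-1})}{\sqrt{\deg x}\sqrt{\deg y}} \delta_{x,t(a)} \delta_{y,t(a^{-1})}. \]
Since $t(a^{-1}) = o(a)$, the product of Kronecker deltas is nonzero only for $a = (y,x)$, and this arc lies in $\MC{A}$ exactly when $\{x,y\} \in E$. Simple graphs have no loops, so when $x = y$ no arc survives and the diagonal entries vanish.

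In the remaining case the only surviving term is $a = (y,x)$, with $a^{-1} = (x,y)$. Its value is $\frac{\sigma(y,x)\sigma(x,y)}{\sqrt{\deg x}\sqrt{\deg y}}$, which matches the claimed formula after commuting the two signs. When $\{x,y\} \notin E$ no arc contributes and the entry is $0$.

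The only point needing care is the index bookkeeping: $d_{\sigma}^*$ has entries $(d_{\sigma}^*)_{b,y} = \overline{(d_{\sigma})_{y,b}}$, and we must track which endpoint of $b = a^{-1}$ equals $y$, namely $t(a^{-1}) = o(a)$. Beyond this there is no genuine obstacle.

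Finally, we note that the condition on sign functions, which forbids $\sigma(a) = \sigma(a^{-1}) = -1$, is not needed for this identity. It only guarantees that the resulting edge sign $\sigma(x,y)\sigma(y,x)$ equals $-1$ exactly on edges carrying a marked arc.
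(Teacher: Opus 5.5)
Your proposal is correct and follows essentially the same route as the paper: expand the product entrywise, collapse the double sum via $S_{a,b}=\delta_{a,b^{-1}}$, use $t(a^{-1})=o(a)$, and observe that only $a=(y,x)$ survives, which exists exactly when $\{x,y\}\in E$. Your extra remarks on the conjugation, the vanishing diagonal, and the role of the sign-function condition are accurate, though the paper does not need them for this identity.
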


\begin{proof}
The same statement appears in~\cite{segawa2021quantum, yoshie2022quantum}, where no proof was provided.
For completeness, we present a proof here.
For $x, y \in V$, we have
\begin{align*}
(T_{\sigma})_{x,y}
&= \sum_{a,b \in \MC{A}} (d_{\sigma})_{x,a} S_{a,b} (d_{\sigma}^*)_{y,b} \\
&= \sum_{a,b \in \MC{A}} \frac{\sigma(a)}{\sqrt{\deg x}} \delta_{x, t(a)} \cdot \delta_{a, b^{-1}} \cdot \frac{\sigma(b)}{\sqrt{\deg y}} \delta_{y, t(b)} \\
&= \frac{1}{\sqrt{\deg x} \sqrt{\deg y}} \sum_{a \in \MC{A}} \sigma(a) \sigma(a^{-1}) \delta_{x, t(a)} \delta_{y, t(a^{-1})} \\
&= \frac{1}{\sqrt{\deg x} \sqrt{\deg y}} \sum_{a \in \MC{A}} \sigma(a) \sigma(a^{-1}) \delta_{x, t(a)} \delta_{y, o(a)}.
\end{align*}
If $\{x,y\} \in E$,
then we can take $a=(y,x)$ as the unique arc satisfying $x = t(a)$ and $y = o(a)$.
Hence
\[ (T_{\sigma})_{x,y} = \frac{\sigma(x,y)\sigma(y,x)}{\sqrt{\deg x} \sqrt{\deg y}}. \]
If $\{x,y\} \not\in E$,
then no arc satisfies $x = t(a)$ and $y = o(a)$,
and thus $(T_{\sigma})_{x,y} = 0$.
\end{proof}

We search for a marked arc in the complete bipartite graph $K_{n,n}$. 
Since $K_{n,n}$ is arc-transitive,
we may assume that an arbitrary arc, say $a = (x',y') \in \MC{A}(K_{n,n})$, is marked without loss of generality.
Considering the quantum search model determined by the sign function $\sigma_a$,
Lemma~\ref{1016-1} implies that the corresponding discriminant matrix $T_{\sigma_a}$ is the normalized adjacency matrix of an edge-signed graph with precisely one negative edge.
That is, $T_{\sigma_a}$ coincides with the normalized adjacency matrix of $K_{n,n}$,
except that the two entries $(x',y')$ and $(y',x')$ are replaced by $-1/n$ instead of $+1/n$.

The eigenvalues and eigenvectors of such matrices have been studied in spectral graph theory.
We present the statement in a form adapted to our setting,
while the essential result is due to Akbari et al.\cite{akbari2018spectrum}.

\begin{thm}[{\cite[Theorem~4.1]{akbari2018spectrum}}] \label{0908-3}
Let $a=(x',y')$ be an arc of $\MC{A}(K_{n,n})$.
The spectrum of $T_{\sigma_a}$ is
\[ \Spec(T_{\sigma_a}) = \{ [\pm \lambda]^1, [\pm \mu]^1, [0]^{2n-4} \}, \]
where $\lambda$ and $\mu$ are as follows:
\[ \lambda = \frac{n-2 + \sqrt{\D}}{2n},
\quad \mu = \frac{-(n-2) + \sqrt{\D}}{2n},
\quad \D = n^2 + 4n -4.
\]
The normalized eigenvector $\BM{f}$ associated with the largest eigenvalue $\lambda$ is
\begin{equation} \label{0908-1}
\BM{f}_x = \frac{1}{\sqrt{\D - n \sqrt{\D}}} \cdot
\begin{cases}
\frac{-n + \sqrt{\D}}{2} \quad &\text{if $x \in \{x', y'\}$,} \\
1 \quad &\text{otherwise.}
\end{cases}
\end{equation}
\end{thm}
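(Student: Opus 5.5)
We would compute the spectrum of $T_{\sigma_a}$ directly by exploiting the symmetry that fixes the marked edge. Write the bipartition as $X \sqcup Y$ with $x' \in X$, $y' \in Y$, and note $nT_{\sigma_a} = A - 2(E_{x'y'} + E_{y'x'})$, where $A$ is the adjacency matrix of $K_{n,n}$ and $E_{uv}$ denotes the matrix unit. The group $\mathrm{Sym}(X\setminus\{x'\}) \times \mathrm{Sym}(Y\setminus\{y'\})$ commutes with $T_{\sigma_a}$. Hence $\MB{C}^V$ splits into the $4$-dimensional space $W$ of vectors constant on each of the four cells $\{x'\}$, $X\setminus\{x'\}$, $\{y'\}$, $Y\setminus\{y'\}$, and its orthogonal complement $W^\perp$.

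On $W^\perp$, a vector sums to zero on $X\setminus\{x'\}$ and on $Y\setminus\{y'\}$ and vanishes at $x'$ and $y'$. Every row of $T_{\sigma_a}$ is constant on $X\setminus\{x'\}$ and on $Y\setminus\{y'\}$, so $T_{\sigma_a}$ annihilates $W^\perp$. This gives the eigenvalue $0$ with multiplicity $\dim W^\perp = 2n-4$.

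On $W$, parametrize a vector by its values $(\alpha, \beta, \gamma, \delta)$ at $x'$, at $X\setminus\{x'\}$, at $y'$, and at $Y\setminus\{y'\}$. The eigenvalue equation $nT_{\sigma_a}\BM{f} = n\theta \BM{f}$ becomes
\[ n\theta\alpha = -\gamma + (n-1)\delta,\quad n\theta\beta = \gamma + (n-1)\delta,\quad n\theta\gamma = -\alpha + (n-1)\beta,\quad n\theta\delta = \alpha + (n-1)\beta. \]
The bipartite sign symmetry (negating the $Y$-part) pairs $\theta$ with $-\theta$. Exchanging $X$ and $Y$ via an automorphism swapping $x'$ and $y'$ further splits $W$ into the symmetric part ($\alpha = \gamma$, $\beta = \delta$) and the antisymmetric part ($\alpha = -\gamma$, $\beta = -\delta$), each $2$-dimensional. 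In the symmetric part the system reduces to
\[ n\theta \begin{pmatrix}\alpha\\ \beta\end{pmatrix} = \begin{pmatrix} -1 & n-1 \\ 1 & n-1\end{pmatrix}\begin{pmatrix}\alpha\\ \beta\end{pmatrix}. \]
Its characteristic polynomial is $t^2 - (n-2)t - n$ with $t = n\theta$, whose roots are $t = \frac{(n-2) \pm \sqrt{\D}}{2}$, since $(n-2)^2 + 4n = n^2 + 4n - 4 = \D$. This yields $\lambda$ and $-\mu$. The antisymmetric part gives the matrix with both signs flipped, hence eigenvalues $-\lambda$ and $\mu$. Together these give $\{\pm\lambda, \pm\mu\}$, each simple.

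For the eigenvector of $\lambda$, take $\beta = \delta = 1$ in the symmetric part. The second equation gives $\alpha = n\lambda - (n-1) = \frac{-n + \sqrt{\D}}{2}$, which matches \eqref{0908-1}. The norm squared is $2\alpha^2 + 2(n-1)$, which expands to $\D - n\sqrt{\D}$ after simplifying with $\alpha^2 = \frac{n^2 + \D - 2n\sqrt{\D}}{4}$. The main obstacle is this bookkeeping, namely checking that the normalization constant is exactly $\D - n\sqrt{\D}$ and that $\lambda$ is indeed the largest eigenvalue. The latter holds because $\sqrt{\D} > n$ gives $\lambda > \mu > 0$.
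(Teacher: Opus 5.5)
Your route differs from the paper's. The paper gives no proof of this statement; it imports Akbari et al.'s Theorem~4.1, which covers signed $K_{n,n}$ whose negative edges form an arbitrary matching, and adapts it to the normalized matrix $T_{\sigma_a}$. Your equitable-partition reduction is a correct, self-contained replacement for the single-negative-edge case:
\begin{itemize}
\item $T_{\sigma_a}$ annihilates $W^\perp$, since every row vanishes on its own side and is constant off the marked vertex on the other side.
\item $W$ is invariant.
\item In cell coordinates, $nT_{\sigma_a}$ has the block form $\begin{pmatrix}0&B\\ B&0\end{pmatrix}$ with $B=\begin{pmatrix}-1&n-1\\ 1&n-1\end{pmatrix}$. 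So the eigenvalues on $W$ are those of $\pm B$ divided by $n$.
\end{itemize}
What your approach buys is an explicit check of the eigenvector and of the normalization $2\alpha^2+2(n-1)=\Delta-n\sqrt{\Delta}$, which the paper simply takes from the citation. What the citation buys is generality.

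One computational step is wrong as written. $\det B=-(n-1)-(n-1)=-2(n-1)$, so the characteristic polynomial is $t^2-(n-2)t-2(n-1)$, not $t^2-(n-2)t-n$. Your supporting identity $(n-2)^2+4n=n^2+4n-4$ is also false: the left side equals $n^2+4$. Your polynomial would give roots $\frac{(n-2)\pm\sqrt{n^2+4}}{2}$. The roots you state come only from the correct polynomial, whose discriminant is $(n-2)^2+8(n-1)=\Delta$.

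This also matters for the eigenvector step. You read $\alpha$ off the second row alone. Consistency with the first row, $(n\lambda+1)\alpha=n-1$, holds precisely because $(n\lambda)^2-(n-2)n\lambda=2(n-1)$.

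Finally, $\lambda>\mu$ does not follow from $\sqrt{\Delta}>n$. It comes from $\lambda-\mu=(n-2)/n$, so it needs $n\ge 3$; at $n=2$ one has $\lambda=\mu$ and $\BM{f}$ is not unique. What $\sqrt{\Delta}>n$ actually gives is $\mu>0$ and $\Delta-n\sqrt{\Delta}>0$. With these corrections, and the tacit assumption $n\ge 2$ so that $\dim W=4$, your proof is complete.
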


Note that Akbari et al.~\cite{akbari2018spectrum} determined both the eigenvalues and the eigenvectors of edge-signed complete bipartite graphs in which the negative edges form a matching.
Among their results, we are concerned only with the largest eigenvalue $\lambda$ and its corresponding eigenvector $\BM{f}$.
Further research on the eigenvalue analysis of edge-signed complete bipartite graphs has been conducted by Pirzada et al.~\cite{pirzada2021eigenvalues}.

We remark that if $n \geq 3$,
then $(n+1)^2 < \D < (n+2)^2$.

\begin{lem} \label{0919-1}
For all sufficiently large $n$, we have
\[ \frac{1}{\sqrt{n}\sqrt{\D - n\sqrt{\D}}} > \frac{\sqrt{2}(n+1)}{2n(n+2)}. \]
\end{lem}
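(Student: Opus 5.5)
The plan is to clear denominators and reduce the claim to an explicit polynomial inequality, using only the bound $(n+1)^2<\D<(n+2)^2$ remarked above (valid for $n\ge 3$). A first check shows both sides behave like $1/(\sqrt{2}\,n)$ to leading order, so a crude asymptotic comparison will not work. The argument has to keep track of the next-order terms, and this is where I expect the real difficulty.

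\emph{Reduction.} Both sides are positive. Note that $\D-n\sqrt{\D}=\sqrt{\D}(\sqrt{\D}-n)>0$ since $\sqrt{\D}>n+1$. Taking reciprocals and squaring, the claimed inequality is equivalent to
\[
n\sqrt{\D}\,(\sqrt{\D}-n) < \frac{4n^2(n+2)^2}{2(n+1)^2},
\qquad\text{i.e.}\qquad
\sqrt{\D}\,(\sqrt{\D}-n) < \frac{2n(n+2)^2}{(n+1)^2}.
\]

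\emph{Bounding the left-hand side.} I rationalize: $\sqrt{\D}-n=\frac{\D-n^2}{\sqrt{\D}+n}=\frac{4n-4}{\sqrt{\D}+n}$. Hence
\[
\sqrt{\D}\,(\sqrt{\D}-n)=\frac{4n-4}{1+n/\sqrt{\D}}.
\]
The map $t\mapsto 1/(1+n/t)$ is increasing, so the upper bound $\sqrt{\D}<n+2$ gives
\[
\sqrt{\D}\,(\sqrt{\D}-n) < \frac{4n-4}{1+\frac{n}{n+2}} = \frac{2(n-1)(n+2)}{n+1}.
\]

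\emph{Concluding.} It remains to verify
\[
\frac{2(n-1)(n+2)}{n+1}\le\frac{2n(n+2)^2}{(n+1)^2}.
\]
After cancelling $2(n+2)/(n+1)$, this becomes $(n-1)(n+1)\le n(n+2)$, i.e.\ $n^2-1\le n^2+2n$, which holds for every $n\ge 1$. Combined with the strict inequality from the previous step, the lemma holds for all $n\ge 3$, and in particular for all sufficiently large $n$.

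The main obstacle is the one identified at the start: the leading terms cancel, so the bound on $\sqrt{\D}$ must be applied in the right place. Bounding $\D-n\sqrt{\D}$ directly by $\D-n(n+1)$ gives only $3n-4$, which is too weak. The rationalization step avoids this, because after it only a ratio involving $\sqrt{\D}$ needs bounding, and there the crude estimate $\sqrt{\D}<n+2$ loses just a constant that the right-hand side absorbs.
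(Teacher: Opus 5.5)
Your proof is correct and is essentially the paper's argument. Both rest on the rationalization $\D - n\sqrt{\D} = \frac{(4n-4)\D}{\D + n\sqrt{\D}}$, followed by a one-sided bound on $\sqrt{\D}$ and an elementary polynomial comparison. The only difference is the bound chosen: the paper bounds $\D + n\sqrt{\D}$ from below via $\sqrt{\D} > n+1$, and its final termwise comparison $n^2 + \frac{5}{2}n - 2 > (n+1)^2$ needs $n \geq 7$, whereas your use of $\sqrt{\D} < n+2$ inside $1 + n/\sqrt{\D}$ gives the inequality explicitly for all $n \geq 3$.
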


\begin{proof}
Indeed,
\begin{align*}
\frac{1}{\sqrt{n}\sqrt{\D - n\sqrt{\D}}}
&= \frac{\sqrt{\D + n\sqrt{\D}}}{\sqrt{n}\sqrt{\D^2 - n^2\D}} \\
&> \frac{\sqrt{(n^2+4n-4) + n(n+1)}}{\sqrt{n}\sqrt{(n^2+4n-4)^2 - n^2(n^2+4n-4)}} \\
&= \frac{\sqrt{2}\sqrt{n^2 + \frac{5}{2}n - 2}}{2\sqrt{n^4+3n^3-8n^2+4n}} \\
&> \frac{\sqrt{2}(n+1)}{2\sqrt{n^4+4n^3+4n^2}} = \frac{\sqrt{2}(n+1)}{2n(n+2)},
\end{align*}
as claimed.
\end{proof}

\begin{lem} \label{0908-2}
Let $\BM{f} \in \MB{C}^{V(K_{n,n})}$ be the vector defined by Equality~\eqref{0908-1}.
For an arc $z \in \MC{A}(K_{n,n})$, we have
\begin{equation}
(d_{\sigma_a}^* \BM{f})_z = \frac{\sigma_a(z)}{\sqrt{n}} \BM{f}_{t(z)},
\end{equation}
and
\begin{equation}
(S d_{\sigma_a}^* \BM{f})_z = \frac{\sigma_a(z^{-1})}{\sqrt{n}} \BM{f}_{o(z)}.
\end{equation}
\end{lem}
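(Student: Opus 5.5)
The plan is to compute both entries directly from the definitions, using only that $K_{n,n}$ is $n$-regular, so every vertex $x$ has $\deg x = n$. Since $\sigma_a$ is real-valued, the conjugate transpose of $d_{\sigma_a}$ is simply its transpose. Hence for an arc $z$,
\[
(d_{\sigma_a}^*)_{z,x} = \overline{(d_{\sigma_a})_{x,z}} = \frac{\sigma_a(z)}{\sqrt{n}}\,\delta_{x,t(z)} .
\]

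For the first identity I will expand the matrix–vector product:
\[
(d_{\sigma_a}^* \BM{f})_z = \sum_{x \in V} \frac{\sigma_a(z)}{\sqrt{n}}\,\delta_{x,t(z)}\,\BM{f}_x .
\]
The Kronecker delta kills every term except $x = t(z)$, leaving $\frac{\sigma_a(z)}{\sqrt{n}} \BM{f}_{t(z)}$. Alternatively, one can write $d_{\sigma_a}^* = \MC{O}_{\sigma_a} d^*$ by Lemma~\ref{0812-1}, using that $\MC{O}_{\sigma_a}$ is real diagonal, and then apply $d^*$ entrywise; this gives the same result.

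For the second identity I will use $S_{z,w} = \delta_{z,w^{-1}}$, so that applying $S$ to any vector just reads off the entry at the inverse arc: $(S\BM{v})_z = \BM{v}_{z^{-1}}$. Applying the first identity at the arc $z^{-1}$ gives
\[
(S d_{\sigma_a}^* \BM{f})_z = (d_{\sigma_a}^*\BM{f})_{z^{-1}} = \frac{\sigma_a(z^{-1})}{\sqrt{n}}\,\BM{f}_{t(z^{-1})}.
\]
Since $t(z^{-1}) = o(z)$, this is the claim.

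There is no real obstacle here. The only points needing care are that the entries of $d_{\sigma_a}$ are real, so taking the adjoint introduces no conjugation, and that regularity makes the degree factor the constant $\sqrt{n}$ regardless of which endpoint is involved. The explicit form \eqref{0908-1} of $\BM{f}$ is never used, so the lemma in fact holds for any vector $\BM{f}$.
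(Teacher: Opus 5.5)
Your proposal is correct and follows the paper's proof essentially step for step. You expand $(d_{\sigma_a}^*)_{z,x} = \frac{\sigma_a(z)}{\sqrt{n}}\delta_{x,t(z)}$ using $n$-regularity and collapse the sum with the Kronecker delta, then use $(S\BM{v})_z = \BM{v}_{z^{-1}}$ together with $t(z^{-1}) = o(z)$; your remark that the explicit form of $\BM{f}$ is never used is also correct.
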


\begin{proof}
For convenience, let $V := V(K_{n,n})$ and $\MC{A} := \MC{A}(K_{n,n})$.
Then we have
\[ (d_{\sigma_a}^* \BM{f})_z
= \sum_{x \in V} (d_{\sigma_a}^*)_{z,x} \BM{f}_x
= \sum_{x \in V} \frac{\sigma_a(z)}{\sqrt{n}}\delta_{x, t(z)} \BM{f}_x
= \frac{\sigma_a(z)}{\sqrt{n}}\BM{f}_{t(z)}.
\]
This in turn yields the second relation:
\begin{align*}
(S d_{\sigma_a}^* \BM{f})_z
&= \sum_{w \in \MC{A}} S_{z,w} (d_{\sigma_a}^* \BM{f})_w
= \sum_{w \in \MC{A}} \delta_{z, w^{-1}} (d_{\sigma_a}^* \BM{f})_w \\
&= (d_{\sigma_a}^* \BM{f})_{z^{-1}}
= \frac{\sigma_a(z^{-1})}{\sqrt{n}}\BM{f}_{t(z^{-1})}
= \frac{\sigma_a(z^{-1})}{\sqrt{n}} \BM{f}_{o(z)},
\end{align*}
as claimed.
\end{proof}

Associated with the largest eigenvalue $\lambda$ of $T_{\sigma_a}$
and its normalized eigenvector $\BM{f}$,
we define and shall use the following notation throughout this section:
\begin{align*}
\theta &:= \arccos \lambda, \\
\BM{\varphi}_{\pm} &:= \frac{1}{\sqrt{2} \sin \theta}(d_{\sigma_a}^* - e^{\pm i \theta} S d_{\sigma_a}^*) \BM{f}, \\
\BM{\beta}_{\pm} &:= \frac{1}{\sqrt{2}}(\BM{\varphi}_+ \pm \BM{\varphi}_-), \\
t^* &:= \left \lfloor \frac{\pi}{2\theta} \right \rfloor, \\
p^* &:= | (U_{\sigma_a}^{t^*} \init)_a |^2.
\end{align*}

Here, $t^*$ is the measurement time,
and $p^*$ is the probability of finding the marked arc $a$ at time $t^*$.
The justification for taking $t^*$ as an appropriate measurement time is provided in~\cite[Remark~4.3]{yoshie2022quantum}.
Roughly speaking, if the largest eigenvalue $\lambda$ of $T_{\sigma_a}$ is sufficiently close to $1$,
then we have $p^* = | (U_{\sigma_a}^{t^*} \init)_a |^2 \approx | \BM{\beta}_{+} |^2$.
In fact, $\BM{\beta}_+$ is a vector whose probability amplitude is concentrated on $a$ and $a^{-1}$,
and hence we will see that $|(U^{t^*} \init)_a|^2$ is approximately $1/2$.
In what follows, we provide mathematically rigorous arguments to justify the above rough discussion.

\subsection{Technical estimates for the success probability $p^*$}

In this subsection, we present technical estimates needed to evaluate the success probability $p^*$.
We prove that the quantity $\sqrt{p^*}$ is decomposed into three terms,
and estimate these three terms separately.

\begin{lem} \label{0905-1}
With the above notation, we have
\begin{equation} \label{0909-1}
\sqrt{p^*} = | (U_{\sigma_a}^{t^*} \init)_a |
\geq |(\BM{\beta}_+)_a| - \| U_{\sigma_a}^{t^*}(i \BM{\beta}_-) + \BM{\beta}_+ \| - \| \init - i \BM{\beta}_- \|.
\end{equation}
\end{lem}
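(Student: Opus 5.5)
This inequality is a pure triangle-inequality statement; it needs no spectral information beyond the fact that $U_{\sigma_a}$ is unitary, which was recalled in Section~3. The plan is to decompose the initial state as
\[
\init = i\BM{\beta}_- + (\init - i\BM{\beta}_-),
\]
and then apply $U_{\sigma_a}^{t^*}$ to both sides. This gives
\[
U_{\sigma_a}^{t^*}\init = -\BM{\beta}_+ + \bigl(U_{\sigma_a}^{t^*}(i\BM{\beta}_-) + \BM{\beta}_+\bigr) + U_{\sigma_a}^{t^*}(\init - i\BM{\beta}_-).
\]
This identity is trivially true, since the $\BM{\beta}_+$ terms cancel.

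Next, we take the $a$-entry of this identity. By the reverse triangle inequality for complex numbers,
\[
|(U_{\sigma_a}^{t^*}\init)_a| \geq |(\BM{\beta}_+)_a| - \bigl|\bigl(U_{\sigma_a}^{t^*}(i\BM{\beta}_-) + \BM{\beta}_+\bigr)_a\bigr| - \bigl|\bigl(U_{\sigma_a}^{t^*}(\init - i\BM{\beta}_-)\bigr)_a\bigr|.
\]
We then bound each of the last two entries by the Euclidean norm of the whole vector, using $|v_a| \le \|v\|$. The second of these norms equals $\|\init - i\BM{\beta}_-\|$, because $U_{\sigma_a}^{t^*}$ is unitary and hence preserves norms. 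This yields Inequality~\eqref{0909-1}.

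There is no real obstacle here; the only thing to watch is the sign convention, namely writing $-\BM{\beta}_+$ so that the middle term appears as $U_{\sigma_a}^{t^*}(i\BM{\beta}_-) + \BM{\beta}_+$. Since $|-(\BM{\beta}_+)_a| = |(\BM{\beta}_+)_a|$, the sign does not affect the leading term. The substantive work, namely showing that the two error norms are small, is deferred to later estimates and is not needed for this lemma.
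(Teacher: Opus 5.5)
Your proof is correct and is essentially the paper's argument: the same splitting $\init = i\BM{\beta}_- + (\init - i\BM{\beta}_-)$, the reverse triangle inequality, the bound $|v_a| \le \|v\|$, and the fact that the unitary $U_{\sigma_a}^{t^*}$ preserves the norm of $\init - i\BM{\beta}_-$. The only cosmetic difference is the order of steps: the paper first bounds the $a$-entry by $\|U_{\sigma_a}^{t^*}\init + \BM{\beta}_+\|$ and then splits that norm, whereas you split the $a$-entry into three terms first.
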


\begin{proof}
The proof essentially reduces to an application of the triangle inequality.
First,
\begin{align*}
| (U_{\sigma_a}^{t^*} \init)_a |
&= |-(\BM{\beta}_+)_a + (U_{\sigma_a}^{t^*} \init)_a + (\BM{\beta}_+)_a| \\
&\geq | -(\BM{\beta}_+)_a | - | (U_{\sigma_a}^{t^*} \init)_a + (\BM{\beta}_+)_a | \\
&\geq | (\BM{\beta}_+)_a | - \| U_{\sigma_a}^{t^*} \init + \BM{\beta}_+ \|.
\end{align*}
On the other hand,
\begin{align*}
\| U_{\sigma_a}^{t^*} \init + \BM{\beta}_+ \|
&= \| U_{\sigma_a}^{t^*} \init - U_{\sigma_a}^{t^*}(i \BM{\beta}_-) + U_{\sigma_a}^{t^*}(i \BM{\beta}_-)  + \BM{\beta}_+ \| \\
&\leq \| U_{\sigma_a}^{t^*} \init - U_{\sigma_a}^{t^*}(i \BM{\beta}_-) \| + \| U_{\sigma_a}^{t^*}(i \BM{\beta}_-)  + \BM{\beta}_+ \| \\
&= \| \init - i \BM{\beta}_- \| + \| U_{\sigma_a}^{t^*}(i \BM{\beta}_-)  + \BM{\beta}_+ \|.
\end{align*}
Therefore,
\begin{align*}
| (U_{\sigma_a}^{t^*} \init)_a |
&\geq | (\BM{\beta}_+)_a | - \left( \| \init - i \BM{\beta}_- \| + \| U_{\sigma_a}^{t^*}(i \BM{\beta}_-)  + \BM{\beta}_+ \| \right) \\
&= | (\BM{\beta}_+)_a | - \| \init - i \BM{\beta}_- \| - \| U_{\sigma_a}^{t^*}(i \BM{\beta}_-)  + \BM{\beta}_+ \|,
\end{align*}
as claimed.
\end{proof}

Hence, evaluating the success probability $p^*$ reduces to evaluating the three terms on the right-hand side of Inequality~\eqref{0909-1}.

\begin{pro} \label{0917-1}
We have
\begin{equation} \label{0905-2}
(\BM{\beta}_+)_a = \frac{(n - \sqrt{\D})(3n-2+\sqrt{\D})}{2\sqrt{2n} \sqrt{\D - n \sqrt{\D}} \sqrt{n^2 - (n-2)\sqrt{\D}}}.
\end{equation}
In particular, for all sufficiently large $n$,
\begin{equation} \label{0905-3}
|(\BM{\beta}_+)_a| > \frac{(n-1)(4n-1)}{4\sqrt{2} n(n+2)}.
\end{equation}
\end{pro}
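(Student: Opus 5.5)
The plan is to compute $(\BM{\beta}_+)_a$ in closed form directly from the definitions, and then bound each factor of the resulting expression separately.

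\textbf{Closed form.} Since $e^{i\theta}+e^{-i\theta}=2\cos\theta=2\lambda$, the definitions give
\[
\BM{\beta}_+=\frac{1}{\sqrt{2}}(\BM{\varphi}_++\BM{\varphi}_-)=\frac{1}{\sin\theta}\bigl(d_{\sigma_a}^*-\lambda S d_{\sigma_a}^*\bigr)\BM{f}.
\]
For $a=(x',y')$ we have $\sigma_a(a)=-1$ and $\sigma_a(a^{-1})=1$. Lemma~\ref{0908-2} then gives $(d_{\sigma_a}^*\BM{f})_a=-\BM{f}_{y'}/\sqrt{n}$ and $(Sd_{\sigma_a}^*\BM{f})_a=\BM{f}_{x'}/\sqrt{n}$. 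By \eqref{0908-1}, $\BM{f}_{x'}=\BM{f}_{y'}=\frac{\sqrt{\D}-n}{2\sqrt{\D-n\sqrt{\D}}}$, so
\[
(\BM{\beta}_+)_a=-\frac{(1+\lambda)(\sqrt{\D}-n)}{2\sqrt{n}\sqrt{\D-n\sqrt{\D}}\,\sin\theta}.
\]
Next we compute $1+\lambda=\frac{3n-2+\sqrt{\D}}{2n}$ and $1-\lambda=\frac{n+2-\sqrt{\D}}{2n}$. Expanding and using $\D=n^2+4n-4$,
\[
4n^2(1-\lambda^2)=2n^2-2(n-2)\sqrt{\D}.
\]
Since $\theta\in(0,\pi)$ we have $\sin\theta>0$, hence $\sin\theta=\frac{\sqrt{2}\sqrt{n^2-(n-2)\sqrt{\D}}}{2n}$. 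Substituting this into the expression above yields \eqref{0905-2} directly.

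\textbf{Lower bound \eqref{0905-3}.} We bound the four factors of $|(\BM{\beta}_+)_a|$ separately.
\begin{enumerate}[(i)]
\item By Lemma~\ref{0919-1}, $\frac{1}{\sqrt{n}\sqrt{\D-n\sqrt{\D}}}>\frac{\sqrt{2}(n+1)}{2n(n+2)}$.
\item Since $\sqrt{\D}>n+1$, we get $3n-2+\sqrt{\D}>4n-1$.
\item The remaining two factors need a sharper estimate than $(n+1)^2<\D$, because the crude bound loses a factor of order $\sqrt{n}$. The key claim is
\[
\sqrt{\D}\ \ge\ n+2-\frac{4}{n+1}.
\]
To check it, square the right-hand side: $(n+2)^2-8-\frac{8}{n+1}+\frac{16}{(n+1)^2}\le (n+2)^2-8=\D$, which holds because $\frac{16}{(n+1)^2}\le\frac{8}{n+1}$ for $n\ge1$.
\item From this claim, $\sqrt{\D}-n\ge \frac{2(n-1)}{n+1}$.
\item Also from the claim, $n^2-(n-2)\sqrt{\D}\le 4+\frac{4(n-2)}{n+1}<8$, so $\frac{1}{\sqrt{n^2-(n-2)\sqrt{\D}}}>\frac{1}{2\sqrt{2}}$.
\end{enumerate}

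Multiplying the bounds from (i), (ii), (iv) and (v), the factors $n+1$ cancel. The constant is $\frac{2\cdot\sqrt{2}}{2\cdot 2\sqrt{2}\cdot 2\sqrt{2}}=\frac{1}{4\sqrt{2}}$, which gives exactly $\frac{(n-1)(4n-1)}{4\sqrt{2}\,n(n+2)}$.

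The main obstacle is finding the sharp lower bound on $\sqrt{\D}$ in (iii). Both $\sqrt{\D}-n$ and $n^2-(n-2)\sqrt{\D}$ are small quantities obtained by cancellation, so they must be controlled to within $O(1/n)$ of their limits $2$ and $8$. The rest is routine algebra, best checked by expanding $4n^2(1-\lambda^2)$ carefully.
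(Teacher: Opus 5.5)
Your proposal is correct and follows the paper's proof. The closed form is obtained the same way, via Lemma~\ref{0908-2} and $\sin\theta=\sqrt{1-\lambda^2}$, and the factor-by-factor bounds land on exactly the paper's intermediate estimates $\frac{2(n-1)}{n+1}$, $4n-1$, $\frac{\sqrt{2}}{4}$ and Lemma~\ref{0919-1}. The only difference is in the two cancellation-sensitive factors: the paper rationalizes them (using $\D-n^2=4n-4$ and $n^4-(n-2)^2\D=16(n-1)^2$) and then needs only the crude bounds $n+1<\sqrt{\D}<n+2$, whereas you get the same bounds from your sharper one-sided estimate $\sqrt{\D}\ge n+2-\frac{4}{n+1}$.
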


\begin{proof}
By Lemma~\ref{0908-2}, we have
\[ (d_{\sigma_a}^* \BM{f})_a = \frac{1}{\sqrt{n}} \cdot \frac{n - \sqrt{\D}}{2 \sqrt{\D - n \sqrt{\D}}}, \qquad
(Sd_{\sigma_a}^* \BM{f})_a = \frac{1}{\sqrt{n}} \cdot \frac{-n + \sqrt{\D}}{2 \sqrt{\D - n \sqrt{\D}}}.
\]
Thus,
\begin{align*}
(\BM{\varphi}_{\pm})_a &= \frac{1}{\sqrt{2} \sin \theta} \{ (d_{\sigma_a}^* \BM{f})_a - e^{\pm i \theta}(Sd_{\sigma_a}^* \BM{f})_a \} \\
&= \frac{1}{\sqrt{2} \sin \theta} \cdot \frac{1}{2 \sqrt{n} \sqrt{\D - n \sqrt{\D}}} \{ n-\sqrt{\D} - e^{\pm i \theta}(-n+\sqrt{\D})\} \\
&= \frac{1}{\sqrt{2} \sin \theta} \cdot \frac{n-\sqrt{\D}}{2 \sqrt{n} \sqrt{\D - n \sqrt{\D}}} (1 + e^{\pm i \theta}).
\end{align*}
Hence,
\begin{align}
(\BM{\beta_+})_a &= \frac{1}{\sqrt{2}}\left\{ (\BM{\varphi}_+)_a + (\BM{\varphi}_-)_a \right\} \notag \\
&= \frac{1}{2 \sin \theta} \cdot \frac{n-\sqrt{\D}}{2 \sqrt{n} \sqrt{\D - n \sqrt{\D}}}(1 + e^{i\theta} + 1 + e^{-i\theta}) \notag \\
&= \frac{1}{2 \sin \theta} \cdot \frac{n-\sqrt{\D}}{2 \sqrt{n} \sqrt{\D - n \sqrt{\D}}}(2 + 2\cos \theta) \label{1118-1} \\
&= \frac{1}{2 \sin \theta} \cdot \frac{n-\sqrt{\D}}{\sqrt{n} \sqrt{\D - n \sqrt{\D}}}(1 + \lambda). \notag
\end{align}
By Theorem~\ref{0908-3},
\[ \sin \theta = \sqrt{1-\cos^2 \theta} = \sqrt{1-\lambda^2}
= \frac{\sqrt{n^2 - (n-2)\sqrt{\D}}}{\sqrt{2} \cdot n}.
\]
From this and Theorem~\ref{0908-3} again, we have
\begin{align*}
(\BM{\beta_+})_a
&= \frac{\sqrt{2} \cdot n}{2\sqrt{n^2 - (n-2)\sqrt{\D}}} \cdot \frac{n-\sqrt{\D}}{\sqrt{n} \sqrt{\D - n \sqrt{\D}}} \left(1 + \frac{n-2+\sqrt{\D}}{2n} \right) \\
&= \frac{(n - \sqrt{\D})(3n-2+\sqrt{\D})}{2\sqrt{2n} \sqrt{\D - n \sqrt{\D}} \sqrt{n^2 - (n-2)\sqrt{\D}}}.
\end{align*}
Next, we show Inequality~\eqref{0905-3}.
We have
\begin{align*}
\left| n-\sqrt{\D} \right|
&= \left| \frac{n^2 - \D}{n+\sqrt{\D}} \right|
= \frac{4n-4}{n+\sqrt{\D}}
> \frac{4n-4}{n+(n+2)} = \frac{2n-2}{n+1}, \\
3n-2+\sqrt{\D}
&> 3n-2+n+1 = 4n-1, \\
\frac{1}{\sqrt{n^2-(n-2)\sqrt{\D}}}
&= \frac{\sqrt{n^2+(n-2)\sqrt{\D}}}{\sqrt{n^4-(n-2)^2 \D}} \\
&> \frac{\sqrt{n^2+(n-2)(n+1)}}{\sqrt{n^4-(n-2)^2(n^2+4n-4)}} \\
&= \frac{\sqrt{2n^2-n-2}}{4n-4} \\
&> \frac{\sqrt{2}(n-1)}{4(n-1)} = \frac{\sqrt{2}}{4}.
\end{align*}
These and Lemma~\ref{0919-1} yield 
\begin{align*}
| (\BM{\beta}_+)_a |
&= \frac{1}{2\sqrt{2}} \cdot \frac{\left| n - \sqrt{\D} \right| \cdot (3n-2+\sqrt{\D})}{\sqrt{n} \sqrt{\D - n \sqrt{\D}} \sqrt{n^2 - (n-2)\sqrt{\D}}} \tag{by~\eqref{0905-2}} \\
&> \frac{1}{2\sqrt{2}} \cdot \frac{2n-2}{n+1} \cdot (4n-1) \cdot \frac{\sqrt{2}}{4} \cdot \frac{\sqrt{2}(n+1)}{2n(n+2)} \\
&= \frac{(n-1)(4n-1)}{4\sqrt{2} n(n+2)},
\end{align*}
as claimed.
\end{proof}

Next, we estimate an upper bound for the second term appearing on the right-hand side of Inequality~\eqref{0909-1}.
For later use, we state the following elementary identity:
\begin{equation} \label{0909-4}
|i e^{i \phi} + 1|^2 = 2 - 2 \sin \phi
\end{equation}
for a real number $\phi$.

\begin{pro} \label{0917-2}
We have
\begin{equation} \label{0909-2}
\| U_{\sigma_a}^{t^*}(i \BM{\beta}_-) + \BM{\beta}_+ \|
\leq \sqrt{\frac{n+2 - \sqrt{\D}}{n}}.
\end{equation}
In particular, for sufficiently large $n$,
\begin{equation} \label{0909-3}
\| U_{\sigma_a}^{t^*}(i \BM{\beta}_-) + \BM{\beta}_+ \|
< \sqrt{\frac{8}{n(2n+3)}}.
\end{equation}
\end{pro}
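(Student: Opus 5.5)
The plan is to express $U_{\sigma_a}^{t^*}$ on the span of $\BM{\varphi}_\pm$, reduce the norm to a scalar identity via \eqref{0909-4}, and then control the resulting sine using the definition of $t^*$.

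First we need the action of $U_{\sigma_a}$ on these vectors. By the theorem of Higuchi et al., $\BM{\varphi}_\pm$ are normalized eigenvectors of $U_{\sigma_a}$ with eigenvalues $e^{\pm i\theta}$. Since $e^{i\theta}\neq e^{-i\theta}$ (because $0<\lambda<1$), they are orthogonal. Writing $i\BM{\beta}_- = \frac{i}{\sqrt2}(\BM{\varphi}_+ - \BM{\varphi}_-)$, we get
\[
U_{\sigma_a}^{t^*}(i\BM{\beta}_-) + \BM{\beta}_+
= \frac{1}{\sqrt2}\bigl(ie^{it^*\theta}+1\bigr)\BM{\varphi}_+ + \frac{1}{\sqrt2}\bigl(-ie^{-it^*\theta}+1\bigr)\BM{\varphi}_-.
\]
By orthonormality, the squared norm is $\frac12|ie^{it^*\theta}+1|^2 + \frac12|-ie^{-it^*\theta}+1|^2$. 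The second modulus is the complex conjugate of the first, so by \eqref{0909-4} the squared norm equals $2-2\sin(t^*\theta)$.

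Next we bound $\sin(t^*\theta)$ from below. By the definition $t^*=\lfloor \pi/(2\theta)\rfloor$, we have $\pi/2-\theta < t^*\theta \le \pi/2$, so $\sin(t^*\theta) \ge \sin(\pi/2-\theta)=\cos\theta=\lambda$. This needs $\theta<\pi/2$, which holds since $\lambda>0$. Hence the squared norm is at most $2-2\lambda$. Substituting $\lambda=\frac{n-2+\sqrt\D}{2n}$ gives $2-2\lambda=\frac{2n-(n-2)-\sqrt\D}{n}=\frac{n+2-\sqrt\D}{n}$, which is \eqref{0909-2}.

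For \eqref{0909-3}, rationalize: $n+2-\sqrt\D = \frac{(n+2)^2-\D}{n+2+\sqrt\D} = \frac{8}{n+2+\sqrt\D}$. Then use $\sqrt\D>n+1$ (valid for $n\ge3$) to get $n+2+\sqrt\D > 2n+3$, giving the strict bound $\sqrt{8/(n(2n+3))}$.

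I expect the main obstacle to be checking that the eigenvector normalization and orthogonality apply correctly, namely that $\sin\theta>0$ and that the $\BM{\varphi}_\pm$ are genuinely orthonormal. The floor-function inequality also has to be applied in the right direction. The remaining steps are routine algebra.
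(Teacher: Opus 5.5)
Your proposal is correct and follows essentially the same route as the paper: expand in the eigenvectors $\BM{\varphi}_\pm$, reduce the squared norm to $2-2\sin(t^*\theta)$ via \eqref{0909-4}, use $\pi/2-\theta < t^*\theta \le \pi/2$ to bound this by $2-2\lambda$, and then rationalize $n+2-\sqrt{\D}=8/(n+2+\sqrt{\D})$ using $\sqrt{\D}>n+1$. One small point in your favour: your explicit use of the orthogonality of $\BM{\varphi}_\pm$ (eigenvectors of the unitary $U_{\sigma_a}$ for the distinct eigenvalues $e^{\pm i\theta}$) is exactly what justifies the paper's step $\|x+y\|^2\le\|x\|^2+\|y\|^2$, which in fact holds there with equality.
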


\begin{proof}
First, we have
\begin{align*}
\| U_{\sigma_a}^{t^*}(i \BM{\beta}_-) + \BM{\beta}_+ \|^2
&= \left \| U_{\sigma_a}^{t^*} \left \{ \frac{i}{\sqrt{2}}(\BM{\varphi}_+ - \BM{\varphi}_-) \right \} + \frac{1}{\sqrt{2}}(\BM{\varphi}_+ + \BM{\varphi}_-) \right \|^2 \\
&= \left \|
\frac{i}{\sqrt{2}}( e^{i t^* \theta}\BM{\varphi}_+ - e^{-i t^*\theta} \BM{\varphi}_- ) + \frac{1}{\sqrt{2}}(\BM{\varphi}_+ + \BM{\varphi}_-) \right \|^2 \\
&= \left \|
\left( \frac{i}{\sqrt{2}} e^{i t^* \theta} + \frac{1}{\sqrt{2}} \right)\BM{\varphi}_+ +
\left( -\frac{i}{\sqrt{2}} e^{-i t^* \theta} + \frac{1}{\sqrt{2}} \right)\BM{\varphi}_- \right \|^2 \\
&\leq \left \|
\frac{1}{\sqrt{2}} \left( i e^{i t^* \theta} + 1 \right)\BM{\varphi}_+ \right \|^2 + \left \|
\frac{1}{\sqrt{2}} \left( -i e^{-i t^* \theta} + 1 \right)\BM{\varphi}_- \right \|^2 \\
&= \frac{1}{2} \left| i e^{i t^*\theta} + 1 \right|^2 + \frac{1}{2} \left| -i e^{-i t^*\theta} + 1 \right|^2.
\end{align*}
Since $-i e^{-i t^*\theta} + 1$ is the complex conjugate of $i e^{i t^*\theta} + 1$, their absolute values are equal.
Thus,
\[ \frac{1}{2} \left| i e^{i t^*\theta} + 1 \right|^2 + \frac{1}{2} \left| -i e^{-i t^*\theta} + 1 \right|^2 = |i e^{i t^*\theta} + 1|^2. \]
By Equality~\eqref{0909-4}, we obtain
\[ |i e^{i t^*\theta} + 1|^2
= 2(1-\sin t^*\theta) = 2 \left\{ 1- \cos \left( \frac{\pi}{2} - t^*\theta \right) \right \} = 2 \left\{ 1- \cos \left( \frac{\pi}{2} -  \left \lfloor \frac{\pi}{2\theta} \right \rfloor \theta \right) \right \}. \]
Here, $0 < \theta < \frac{\pi}{2}$ and $\frac{\pi}{2\theta} - 1 \leq \lfloor \frac{\pi}{2\theta} \rfloor \leq \frac{\pi}{2\theta}$ imply
\[ 0 \leq \frac{\pi}{2} - \left \lfloor \frac{\pi}{2\theta} \right \rfloor \theta \leq \theta < \frac{\pi}{2}. \]
Hence,
\[ \| U_{\sigma_a}^{t^*}(i \BM{\beta}_-) + \BM{\beta}_+ \|^2
\leq 2(1-\cos \theta) = 2(1 - \lambda) = \frac{n+2 - \sqrt{\D}}{n}, \]
where the last equality follows from Theorem~\ref{0908-3}.

We proceed to show Inequality~\eqref{0909-3}.
We have
\begin{align*}
\| U_{\sigma_a}^{t^*}(i \BM{\beta}_-) + \BM{\beta}_+ \|
&\leq \sqrt{\frac{n+2 - \sqrt{\D}}{n}} \tag{by \eqref{0909-2}} \\
&= \sqrt{\frac{(n+2)^2 - \D}{n(n+2 + \sqrt{\D})}} \\
&< \sqrt{\frac{8}{n (n+2 + (n+1))}} \\
&= \sqrt{\frac{8}{n(2n+3)}},
\end{align*}
as claimed.
\end{proof}

Finally, we estimate an upper bound for the third term.

\begin{pro} \label{0917-3}
We have
\begin{equation*} 
\| \init - i \BM{\beta_-} \|
= \sqrt{2 - \frac{\sqrt{2}(n-1)(\sqrt{\D}+n)}{n \sqrt{n} \sqrt{\D - n\sqrt{\D}} }}.
\end{equation*}
In particular, for sufficiently large $n$,
\begin{equation} \label{0910-2}
\| \init - i \BM{\beta_-} \|
< \sqrt{2 - \frac{(n+1)(n-1)(2n+1)}{n^2(n+2)}}.
\end{equation}
\end{pro}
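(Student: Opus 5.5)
We expand the squared norm as
\[ \| \init - i\BM{\beta}_-\|^2 = \|\init\|^2 + \|\BM{\beta}_-\|^2 - 2\,\mathrm{Re}\,\langle \init, i\BM{\beta}_-\rangle. \]
Since $\BM{\varphi}_\pm$ are orthonormal eigenvectors of the unitary $U_{\sigma_a}$ for the distinct eigenvalues $e^{\pm i\theta}$ (note $0<\theta<\pi/2$), we have $\|\BM{\beta}_-\|=1$. Hence the quantity equals $2 - 2\,\mathrm{Re}(i\,\init^*\BM{\beta}_-)$, and the task reduces to computing the scalar $\init^*\BM{\beta}_-$.

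To compute it, write
\[ \BM{\beta}_- = \frac{1}{\sqrt2}(\BM{\varphi}_+-\BM{\varphi}_-) = \frac{1}{2\sin\theta}\bigl(-(e^{i\theta}-e^{-i\theta})\bigr) S d_{\sigma_a}^*\BM{f} = -i\, S d_{\sigma_a}^*\BM{f}, \]
because the $d_{\sigma_a}^*\BM{f}$ parts cancel. Thus $i\BM{\beta}_- = S d_{\sigma_a}^*\BM{f}$ is real. By Lemma~\ref{0908-2}, its $z$-entry is $\sigma_a(z^{-1})\BM{f}_{o(z)}/\sqrt n$, so
\[ \init^*(i\BM{\beta}_-) = \frac{1}{\sqrt{2n^2}\sqrt n}\sum_{z}\sigma_a(z^{-1})\BM{f}_{o(z)}. \]

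Next we evaluate this sum using \eqref{0908-1}. Every vertex has degree $n$, so without signs the sum is $n\sum_x \BM{f}_x$. The sign correction affects only $z=a^{-1}=(y',x')$, whose origin is $y'$; there the term $\BM{f}_{y'}$ is replaced by $-\BM{f}_{y'}$. Write $c=(-n+\sqrt\D)/2$ and $N=\sqrt{\D-n\sqrt\D}$. Then
\[ \sum_z \sigma_a(z^{-1})\BM{f}_{o(z)} = \frac{1}{N}\bigl(n(2c+2n-2) - 2c\bigr) = \frac{1}{N}\bigl((n-1)(2c+2n)\bigr) = \frac{(n-1)(\sqrt\D+n)}{N}. \]
Dividing by $\sqrt2\, n\sqrt n$ and doubling gives
\[ 2\,\init^*(i\BM{\beta}_-) = \frac{\sqrt2 (n-1)(\sqrt\D+n)}{n\sqrt n\, N}, \]
which is exactly the claimed identity.

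For the inequality \eqref{0910-2}, we need a lower bound on the subtracted term. Use $\sqrt\D+n>2n+1$, which follows from the remark $\sqrt\D>n+1$, together with Lemma~\ref{0919-1}:
\[ \frac{1}{\sqrt n\, N}>\frac{\sqrt2(n+1)}{2n(n+2)}. \]
Multiplying these bounds gives
\[ \frac{\sqrt2(n-1)(2n+1)}{n}\cdot\frac{\sqrt2(n+1)}{2n(n+2)} = \frac{(n+1)(n-1)(2n+1)}{n^2(n+2)}, \]
which matches the statement. The main obstacle is the bookkeeping in the sign-corrected sum: we must check that the single negative sign sits at $z=a^{-1}$ and not at $a$, and that the resulting expression is real so that the $i$ factors cancel correctly. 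Everything else is substitution.
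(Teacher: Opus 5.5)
Your proposal is correct and follows the paper's proof essentially step by step. You identify $i\BM{\beta}_- = S d_{\sigma_a}^*\BM{f}$ as a real vector, reduce $\|\init - i\BM{\beta}_-\|^2$ to $2 - 2(S d_{\sigma_a}^*\BM{f}, \init)$, evaluate that inner product via Lemma~\ref{0908-2}, and bound it using $\sqrt{\D} > n+1$ together with Lemma~\ref{0919-1}. The only cosmetic differences are these: you compute the arc sum as $n\sum_x \BM{f}_x$ plus a single sign correction at $a^{-1}$, where the paper counts the three classes of arcs, and you explicitly justify $\|\BM{\beta}_-\| = 1$ via orthonormality of $\BM{\varphi}_\pm$, which the paper uses implicitly.
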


\begin{proof}
First, we confirm that
\begin{equation} \label{0910-3}
i \BM{\beta_-} = Sd_{\sigma_a}^* \BM{f}
\end{equation}
and that $i \BM{\beta_-} \in \MB{R}^{\MC{A}}$.
Indeed,
\begin{align*}
\BM{\beta_-}
&= \frac{1}{\sqrt{2}}(\BM{\varphi}_+ - \BM{\varphi}_-) \\
&= \frac{1}{\sqrt{2}} \cdot \frac{1}{\sqrt{2} \sin \theta} \left\{ (d_{\sigma_a}^* - e^{i \theta}Sd_{\sigma_a}^*) - (d_{\sigma_a}^* - e^{-i \theta}Sd_{\sigma_a}^*) \right\} \BM{f} \\
&= \frac{1}{\sqrt{2}} \cdot \frac{-2i \sin \theta}{\sqrt{2} \sin \theta} Sd_{\sigma_a}^*\BM{f} \\
&= -i Sd_{\sigma_a}^*\BM{f}.
\end{align*}
Hence,
\begin{align*}
\| \init - i \BM{\beta_-} \|^2
&= (\init - i \BM{\beta_-}, \init - i \BM{\beta_-}) \\
&= \| \init \|^2 + \| i \BM{\beta_-} \|^2 - (\init, i \BM{\beta_-}) - (i \BM{\beta_-}, \init) \\
&= 2 - 2(i \BM{\beta_-}, \init) \\
&= 2 - 2(Sd_{\sigma_a}^* \BM{f}, \init). \tag{by~\eqref{0910-3}}
\end{align*}
By Lemma~\ref{0908-2}, we obtain
\begin{align*}
(Sd_{\sigma_a}^*\BM{f})_z
&= \frac{\sigma_a(z^{-1})}{\sqrt{n}} \BM{f}_{o(z)} \\
&= \frac{1}{\sqrt{n}\sqrt{\D - n\sqrt{\D}}} \cdot \begin{cases}
- \frac{\sqrt{\D} - n}{2} \quad \quad &\text{if $z = a^{-1}$,} \\
\frac{\sqrt{\D} - n}{2} \quad &\text{if $z \neq a^{-1}$ and $o(z) \in \{o(a), t(a)\}$,} \\
1 \quad &\text{otherwise.}
\end{cases}
\end{align*}
The number of arcs $z$ satisfying each of the above conditions is, from top to bottom, $1$, $2n-1$, and $2n(n-1)$, respectively.
Thus,
\begin{align*}
(Sd_{\sigma_a}^* \BM{f}, \init)
&= \frac{1}{\sqrt{2n^2}} \cdot \frac{1}{\sqrt{n}\sqrt{\D - n\sqrt{\D}}}
\left( -\frac{\sqrt{\D} - n}{2} + (2n-1) \cdot \frac{\sqrt{\D} - n}{2} + 2n(n-1) \cdot 1 \right) \\
&= \frac{(n-1)(\sqrt{\D} + n)}{\sqrt{2} \cdot n\sqrt{n}\sqrt{\D - n\sqrt{\D}}}.
\end{align*}
Therefore, we have
\[ \| \init - i \BM{\beta_-} \|^2
= 2 - 2 \cdot \frac{(n-1)(\sqrt{\D} + n)}{\sqrt{2} \cdot n\sqrt{n}\sqrt{\D - n\sqrt{\D}}}
= 2 - \frac{\sqrt{2}(n-1)(\sqrt{\D}+n)}{n \sqrt{n} \sqrt{\D - n\sqrt{\D}} }. \]

We proceed to show Inequality~\eqref{0910-2}.
It is sufficient to show that
\[ \frac{\sqrt{2}(n-1)(\sqrt{\D}+n)}{n \sqrt{n} \sqrt{\D - n\sqrt{\D}}}
> \frac{(n+1)(n-1)(2n+1)}{n^2(n+2)}. \]
By Lemma~\ref{0919-1}, we have
\begin{align*}
\frac{\sqrt{2}(n-1)(\sqrt{\D}+n)}{n \sqrt{n} \sqrt{\D - n\sqrt{\D}}}
&> \frac{\sqrt{2}(n-1) \{ (n+1)+n \}}{n} \cdot \frac{\sqrt{2}(n+1)}{2n(n+2)} \\
&= \frac{(n+1)(n-1)(2n+1)}{n^2(n+2)},
\end{align*}
as claimed.
\end{proof}

\subsection{A brief review of Big-Theta notation}

This subsection briefly reviews the definition and basic properties of Big-Theta notation,
which is used to describe the asymptotic order of functions.
Throughout this paper,
we denote the set of positive integers by $\MB{N}$.
Let $f, g: \MB{N} \to \MB{R}$ be functions with $f(n), g(n) > 0$ for all sufficiently large $n$.
We say that $f$ and $g$ are {\it of the same order},
written as $f(n) = \Theta(g(n))$,
if there exist a positive integer $N$ and positive constants $c_1, c_2$ such that the implication
\[ n \geq N \quad \Longrightarrow \quad c_1g(n) \leq f(n) \leq c_2g(n) \]
holds.
If the quotient $f/g$ tends to a positive finite limit,
then the two functions are of the same order:

\begin{lem} \label{0912-2}
Let $f, g: \MB{N} \to \MB{R}$ be functions with $f(n), g(n) > 0$ for all sufficiently large $n$.
If
\[ \lim_{n \to \infty} \frac{f(n)}{g(n)} = c \, (>0) \]
then $f(n) = \Theta(g(n))$.
\end{lem}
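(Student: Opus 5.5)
We argue directly from the definitions of the limit and of $\Theta$-notation, with the tolerance in the limit set to $c/2$. Since $\lim_{n\to\infty} f(n)/g(n) = c > 0$, there is a positive integer $N_1$ such that
\[ \left| \frac{f(n)}{g(n)} - c \right| < \frac{c}{2} \]
for all $n \geq N_1$. By hypothesis, there is also a positive integer $N_2$ such that $f(n) > 0$ and $g(n) > 0$ for all $n \geq N_2$. Set $N := \max\{N_1, N_2\}$.

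For $n \geq N$, the inequality above unpacks to
\[ \frac{c}{2} < \frac{f(n)}{g(n)} < \frac{3c}{2}. \]
Because $g(n) > 0$, we may multiply through by $g(n)$ without reversing the inequalities. This gives
\[ \frac{c}{2}\, g(n) \leq f(n) \leq \frac{3c}{2}\, g(n) \qquad (n \geq N). \]
With $c_1 := c/2$ and $c_2 := 3c/2$, both positive because $c > 0$, this is exactly the definition of $f(n) = \Theta(g(n))$.

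The only point needing care is the choice of tolerance: it must be strictly smaller than $c$ so that the lower constant $c_1$ comes out strictly positive, and $c/2$ satisfies this. The positivity of $g(n)$ for large $n$ is what justifies multiplying the inequality through by $g(n)$.
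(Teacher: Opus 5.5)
Your proof is correct and follows essentially the same approach as the paper: take tolerance $\varepsilon = c/2$ in the limit, obtain $\frac{c}{2} < f(n)/g(n) < \frac{3c}{2}$, and multiply through by $g(n)$. Your choice of $N = \max\{N_1, N_2\}$, which guarantees $g(n) > 0$ before multiplying, makes explicit a point the paper's proof uses without stating.
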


\begin{proof}
The limit $c$ exists and is positive,
and hence let $\varepsilon := c/2 > 0$.
Then there exists a positive integer $N$ such that for all $n \geq N$, 
\[ \left| \frac{f(n)}{g(n)} - c \right| < \varepsilon = \frac{c}{2}. \]
This inequality implies
\[ \frac{c}{2}g(n) < f(n) < \frac{3c}{2}g(n), \]
and therefore the claim follows.
\end{proof}

\subsection{The order of the measurement time $t^*$ and an asymptotic estimate of the success probability $p^*$}
In this subsection, we derive an asymptotic estimate of the measurement time $t^*$ and the success probability $p^*$.

\begin{lem} \label{0912-1}
For $x \in (0,1)$, we have
\[ \frac{1}{\sqrt{2(1-x)}} -1 \leq \frac{1}{\arccos x} \leq \frac{1}{\sqrt{2(1-x)}}. \]
\end{lem}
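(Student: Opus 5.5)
Substitute $x = \cos\phi$ with $\phi = \arccos x \in (0,\pi/2)$, so that $1-x = 1-\cos\phi = 2\sin^2(\phi/2)$ and hence $\sqrt{2(1-x)} = 2\sin(\phi/2)$. The claimed inequality then becomes
\[ \frac{1}{2\sin(\phi/2)} - 1 \leq \frac{1}{\phi} \leq \frac{1}{2\sin(\phi/2)}, \]
and we treat the two halves separately.

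The right-hand inequality is equivalent to $2\sin(\phi/2) \leq \phi$. This is the elementary bound $\sin u \leq u$ for $u \geq 0$, applied with $u = \phi/2$. (It follows, for instance, from the fact that $u - \sin u$ vanishes at $0$ and has derivative $1 - \cos u \geq 0$.)

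For the left-hand inequality, we need $\frac{1}{2\sin(\phi/2)} - \frac{1}{\phi} \leq 1$, that is,
\[ \phi - 2\sin(\phi/2) \leq 2\phi\sin(\phi/2). \]
We bound the left side above using $\sin u \geq u - u^3/6$, which gives $\phi - 2\sin(\phi/2) \leq \phi^3/24$. We bound the right side below using Jordan's inequality $\sin u \geq 2u/\pi$ on $[0,\pi/2]$, which gives $2\phi\sin(\phi/2) \geq 2\phi^2/\pi$. It therefore suffices that $\phi^3/24 \leq 2\phi^2/\pi$, i.e. $\phi \leq 48/\pi$, and this clearly holds since $\phi < \pi/2$.

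There is no real obstacle here. The only care needed is choosing lower bounds for the sine that are crude but valid on all of $(0,\pi/2)$; a Taylor remainder together with Jordan's inequality suffices with a large margin.
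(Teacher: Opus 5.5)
Your proof is correct. The substitution $\phi=\arccos x\in(0,\pi/2)$ gives $\sqrt{2(1-x)}=2\sin(\phi/2)$, which reduces the right inequality to $\sin u\le u$. For the left inequality, the chain $\phi-2\sin(\phi/2)\le \phi^3/24\le 2\phi^2/\pi\le 2\phi\sin(\phi/2)$ settles it.

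The paper states this lemma without any proof and uses it only to bound $t^*$. So there is no argument of theirs to compare against, and your proof supplies the missing justification.

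As a side remark, your own estimates give more than the lemma claims. Dividing the two bounds yields $\frac{1}{2\sin(\phi/2)}-\frac{1}{\phi}\le \frac{\pi\phi}{48}<\frac{\pi^2}{96}$. Hence the constant $1$ in the lemma is very loose, which is harmless for the paper's $\Theta(n)$ conclusion.
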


\begin{thm}
We have
\[ \frac{\pi}{2} \left( \sqrt{\frac{n(2n+3)}{8}} - 2 \right)
\leq t^* \leq \frac{\pi}{4}\sqrt{n(n+2)}. \]
In particular, $t^* = \Theta(n)$.
\end{thm}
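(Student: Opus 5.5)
The plan is to sandwich $\theta=\arccos\lambda$ using Lemma~\ref{0912-1} with $x=\lambda\in(0,1)$, and then pass to $t^*=\lfloor \pi/(2\theta)\rfloor$. By Theorem~\ref{0908-3} we have
\[ 2(1-\lambda)=\frac{n+2-\sqrt{\D}}{n}, \]
which is the quantity already handled in the proof of Proposition~\ref{0917-2}. Lemma~\ref{0912-1} then gives
\[ \sqrt{\frac{n}{n+2-\sqrt{\D}}}-1\le \frac1\theta\le \sqrt{\frac{n}{n+2-\sqrt{\D}}}. \]
Everything therefore reduces to two-sided bounds on $n+2-\sqrt{\D}$. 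We rationalize as in Proposition~\ref{0917-2}:
\[ n+2-\sqrt{\D}=\frac{(n+2)^2-\D}{n+2+\sqrt{\D}}=\frac{8}{n+2+\sqrt{\D}}. \]
We then use $(n+1)^2<\D<(n+2)^2$, valid for $n\ge 3$.

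\emph{Upper bound.} We have $t^*\le \pi/(2\theta)\le \frac{\pi}{2}\sqrt{n/(n+2-\sqrt{\D})}$. Since $\sqrt{\D}<n+2$, we get $n+2-\sqrt{\D}> 8/(2(n+2))=4/(n+2)$. Hence $n/(n+2-\sqrt{\D})< n(n+2)/4$, and so $t^*\le \frac{\pi}{4}\sqrt{n(n+2)}$.

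\emph{Lower bound.} Since $\lfloor y\rfloor\ge y-1$, we have $t^*\ge \pi/(2\theta)-1$. Using $\sqrt{\D}>n+1$ gives $n+2-\sqrt{\D}<8/(2n+3)$, exactly as in the proof of \eqref{0909-3}. Hence $\sqrt{n/(n+2-\sqrt{\D})}>\sqrt{n(2n+3)/8}=:X$. Combining with the lower half of Lemma~\ref{0912-1}, $\pi/(2\theta)\ge \frac{\pi}{2}(X-1)$, so
\[ t^*\ge \frac{\pi}{2}(X-1)-1\ge \frac{\pi}{2}(X-2), \]
where the last step uses $1\le \pi/2$.

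Finally, for $t^*=\Theta(n)$ we note that the lower bound divided by $n$ tends to $\frac{\pi}{2}\cdot\frac{1}{2}=\frac{\pi}{4}>0$, and the upper bound divided by $n$ tends to $\pi/4$ as well. Applying Lemma~\ref{0912-2} to both bounds (each is positive for large $n$) makes them $\Theta(n)$, and the sandwich yields $t^*=\Theta(n)$.

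The only delicate point is keeping the directions of the inequalities straight when inverting $n+2-\sqrt{\D}$, together with the floor and the "$-1$" in Lemma~\ref{0912-1}. The slack $\pi/2\ge 1$ absorbs the extra $-1$ coming from the floor.
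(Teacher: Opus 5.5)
Your proof is correct and matches the paper's argument step for step: Lemma~\ref{0912-1} at $x=\lambda$, the rationalization $n+2-\sqrt{\D}=8/(n+2+\sqrt{\D})$ combined with $(n+1)^2<\D<(n+2)^2$, the floor bound $t^*\ge \pi/(2\theta)-1$ absorbed via $1\le \pi/2$, and Lemma~\ref{0912-2} after dividing by $n$. The only cosmetic difference is that you apply Lemma~\ref{0912-2} to each bound separately and then sandwich, whereas the paper squeezes $t^*/n$ directly; both are equally valid.
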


\begin{proof}
First, we establish the upper bound.
We have
\begin{align*}
t^*
&\leq \frac{\pi}{2 \arccos \lambda} \\
&\leq \frac{\pi}{2} \cdot \frac{1}{\sqrt{2(1-\lambda)}} \tag{by Lemma~\ref{0912-1}} \\
&= \frac{\pi}{2} \cdot \sqrt{\frac{n}{n+2 - \sqrt{\D}}} \tag{by Theorem~\ref{0908-3}} \\
&= \frac{\pi}{2} \cdot \sqrt{\frac{n(n+2 + \sqrt{\D})}{(n+2 - \sqrt{\D})(n+2 + \sqrt{\D})}} \\
&\leq \frac{\pi}{2} \cdot \sqrt{\frac{n(n+2 + n+2)}{8}} \\
&= \frac{\pi}{4}\sqrt{n(n+2)}.
\end{align*}
The lower bound can be proven as follows.
\begin{align*}
t^* &\geq \frac{\pi}{2 \arccos \lambda} - 1 \\
&\geq \frac{\pi}{2}\left( \frac{1}{\sqrt{2(1-\lambda)}} - 1 \right) - \frac{\pi}{2} \tag{by Lemma~\ref{0912-1}} \\
&= \frac{\pi}{2}\left( \sqrt{\frac{n}{n+2-\sqrt{\D}}} - 2 \right) \tag{by Theorem~\ref{0908-3}} \\
&= \frac{\pi}{2}\left( \sqrt{\frac{n(n+2+\sqrt{\D})}{(n+2-\sqrt{\D})(n+2+\sqrt{\D})}} - 2 \right) \\
&\geq \frac{\pi}{2}\left( \sqrt{\frac{n(n+2+n+1)}{8}} - 2 \right) \\
&= \frac{\pi}{2} \left( \sqrt{\frac{n(2n+3)}{8}} - 2 \right).
\end{align*}
Furthermore, dividing every term of the obtained inequality by $n$ yields
\[ \frac{\pi}{2} \left( \sqrt{\frac{1}{4}+\frac{3}{8n}} - \frac{2}{n} \right) \leq \frac{t^*}{n} \leq \frac{\pi}{4} \sqrt{1+\frac{2}{n}}. \]
Since the leftmost and rightmost terms converge to $\frac{\pi}{4}$ as $n \to \infty$,
Lemma~\ref{0912-2} implies that $t^* = \Theta(n)$.
\end{proof}

Note that the number of arcs is $|\MC{A}(K_{n,n})| = 2n^2 = \Theta(n^2)$, and thus a quadratic speedup is achieved.
Next, we derive an asymptotic estimate of $p^*$.

\begin{lem} \label{0917-4}
Let $f: \MB{N} \to \MB{R}$ be a function with $f(n) > 0$ for all sufficiently large $n$.
If
\[ f(n) \geq \frac{1}{\sqrt{2}} - \Theta \left( \frac{1}{\sqrt{n}} \right), \]
then
\[ f(n)^2 \geq \frac{1}{2} - \Theta\left( \frac{1}{\sqrt{n}} \right). \]
\end{lem}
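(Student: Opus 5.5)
First I will fix what the hypothesis means. Since this is a one-sided bound, I read $f(n) \geq \frac{1}{\sqrt{2}} - \Theta(1/\sqrt{n})$ as follows: there is a function $h$ with $h(n) = \Theta(1/\sqrt{n})$ and $f(n) \geq \frac{1}{\sqrt{2}} - h(n)$ for all sufficiently large $n$. By the definition of Big-Theta in the previous subsection, there are $N$ and constants $c_1, c_2 > 0$ such that $c_1/\sqrt{n} \leq h(n) \leq c_2/\sqrt{n}$ for $n \geq N$. The goal is then to produce a function $k(n) = \Theta(1/\sqrt{n})$ with $f(n)^2 \geq \frac12 - k(n)$ for all large $n$.

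The key step is squaring, which needs both sides to be nonnegative. Because $h(n) \leq c_2/\sqrt{n} \to 0$, we can enlarge $N$ so that $\frac{1}{\sqrt{2}} - h(n) > 0$ for $n \geq N$; we can also ensure $f(n) > 0$ there. Squaring the inequality between nonnegative reals then gives
\[ f(n)^2 \geq \left(\frac{1}{\sqrt{2}} - h(n)\right)^2 = \frac{1}{2} - \sqrt{2}\,h(n) + h(n)^2 \geq \frac{1}{2} - \sqrt{2}\,h(n). \]
So we may take $k(n) := \sqrt{2}\,h(n)$. It satisfies $\sqrt{2}c_1/\sqrt{n} \leq k(n) \leq \sqrt{2}c_2/\sqrt{n}$ for $n \geq N$, hence $k(n) = \Theta(1/\sqrt{n})$. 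This gives exactly the claimed $f(n)^2 \geq \frac12 - \Theta(1/\sqrt{n})$.

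There is no real obstacle. The only points needing care are the interpretation of the one-sided $\Theta$ and the positivity check before squaring, which is why $N$ must be enlarged. Alternatively, one could keep $k(n) = \sqrt{2}h(n) - h(n)^2$, which is still $\Theta(1/\sqrt{n})$ because $h(n)^2 = O(1/n)$. Discarding $h(n)^2 \geq 0$ is simpler and is the route I would take.
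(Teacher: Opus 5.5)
Your proof is correct and is essentially the paper's argument: you use the same reading of the one-sided $\Theta$ and the same positivity threshold ($n > 2c_2^2$) before squaring, and you square to get $\frac12 - \sqrt{2}h(n) + h(n)^2$. The only difference is that you drop $h(n)^2 \geq 0$ and take $k(n) = \sqrt{2}h(n)$, while the paper keeps $\sqrt{2}g(n) - g(n)^2$ and needs the extra threshold $n > c_2^4/c_1^2$ to bound it below by $(\sqrt{2}-1)c_1/\sqrt{n}$; your shortcut is slightly simpler and gives the same conclusion.
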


\begin{proof}
From our assumption, there exists a function $g(n) = \Theta(1/\sqrt{n})$ such that
\begin{equation} \label{0915-1}
f(n) \geq \frac{1}{\sqrt{2}} - g(n),
\end{equation}
and hence there exist a positive integer $N$ and positive constants $c_1, c_2$ such that if $n > N$, then
\begin{equation}\label{0915-2}
\frac{c_1}{\sqrt{n}} \leq g(n) \leq \frac{c_2}{\sqrt{n}}
\end{equation}
holds.
If $n > \max \{ N, 2c_2^2 \}$, then
\[ g(n) \leq \frac{c_2}{\sqrt{n}} < \frac{c_2}{\sqrt{2c_2^2}} = \frac{1}{\sqrt{2}}. \]
Thus, squaring both sides of Inequality~\eqref{0915-1} preserves the inequality, and we obtain
\[ f(n)^2 \geq \frac{1}{2} - \left( \sqrt{2}g(n) - g(n)^2 \right). \]
Our remaining task is to show that $\sqrt{2}g(n) - g(n)^2 = \Theta(1/\sqrt{n})$.
We claim that if $n > \max\{N, 2c_2^2, c_2^4/c_1^2 \}$, then
\[ \frac{(\sqrt{2} - 1)c_1}{\sqrt{n}} \leq \sqrt{2}g(n) - g(n)^2 \leq \frac{\sqrt{2}c_2}{\sqrt{n}}. \]
For the upper bound, Inequality~\eqref{0915-2} immediately gives
\[ \sqrt{2}g(n) - g(n)^2 \leq \frac{\sqrt{2}c_2}{\sqrt{n}} - \frac{c_1^2}{n} \leq \frac{\sqrt{2}c_2}{\sqrt{n}}. \]
For the lower bound, we have
\begin{align*}
\sqrt{2}g(n) - g(n)^2
&\geq \frac{\sqrt{2}c_1}{\sqrt{n}} - \frac{c_2^2}{n} \tag{by \eqref{0915-2}} \\
&= \frac{1}{\sqrt{n}} \left( \sqrt{2}c_1 - \frac{c_2^2}{\sqrt{n}} \right) \\
&> \frac{1}{\sqrt{n}} \left( \sqrt{2}c_1 - c_2^2 \cdot \frac{c_1}{c_2^2} \right) \tag{since $n > c_2^4/c_1^2$} \\
&= \frac{(\sqrt{2} - 1)c_1}{\sqrt{n}}.
\end{align*}
Therefore, $\sqrt{2}g(n) - g(n)^2 = \Theta(1/\sqrt{n})$,
which completes the proof.
\end{proof}

\begin{thm}
We have
\[ p^* \geq \frac{1}{2} - \Theta \left( \frac{1}{\sqrt{n}} \right). \]
\end{thm}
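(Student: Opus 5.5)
The plan is to combine Lemma~\ref{0905-1} with the three estimates of Propositions~\ref{0917-1}, \ref{0917-2} and \ref{0917-3}, show that $\sqrt{p^*} \geq \frac{1}{\sqrt{2}} - \Theta(1/\sqrt{n})$, and then square using Lemma~\ref{0917-4}. By Lemma~\ref{0905-1} together with Inequalities~\eqref{0905-3}, \eqref{0909-3} and \eqref{0910-2}, for all sufficiently large $n$ we have
\[
\sqrt{p^*} \geq h(n) := \frac{(n-1)(4n-1)}{4\sqrt{2}\, n(n+2)} - \sqrt{\frac{8}{n(2n+3)}} - \sqrt{2 - \frac{(n+1)(n-1)(2n+1)}{n^2(n+2)}}.
\]
Set $g(n) := \frac{1}{\sqrt{2}} - h(n)$. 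Then $\sqrt{p^*} \geq \frac{1}{\sqrt{2}} - g(n)$, and the task reduces to showing $g(n) = \Theta(1/\sqrt{n})$.

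To this end, I will expand each piece asymptotically. The first term equals $\frac{1}{\sqrt{2}} - \frac{13n+1}{4\sqrt{2}\,n(n+2)}$, so its deficit from $\frac{1}{\sqrt{2}}$ is $\Theta(1/n)$. The second term is $\Theta(1/n)$. For the third term, I compute
\[
2 - \frac{(n+1)(n-1)(2n+1)}{n^2(n+2)} = \frac{2n^3+4n^2-(2n^3+n^2-2n-1)}{n^2(n+2)} = \frac{3n^2+2n+1}{n^2(n+2)},
\]
which is $\Theta(1/n)$. Hence its square root is $\Theta(1/\sqrt{n})$ and dominates. Concretely, I will show that $\sqrt{n}\, g(n) \to \sqrt{3}$, since the other two contributions multiplied by $\sqrt{n}$ tend to $0$. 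Lemma~\ref{0912-2} then gives $g(n) = \Theta(1/\sqrt{n})$.

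Finally, Lemma~\ref{0917-4} is applied to $f(n) := \sqrt{p^*}$, which needs $f(n) > 0$. This follows for large $n$ because $h(n) \to \frac{1}{\sqrt{2}} > 0$. Should positivity of $\sqrt{p^*}$ be an issue, note that the squaring step in the proof of Lemma~\ref{0917-4} only needs $\frac{1}{\sqrt{2}} - g(n) > 0$, and this holds for large $n$. The lemma then yields $p^* \geq \frac{1}{2} - \Theta(1/\sqrt{n})$.

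The only real obstacle is bookkeeping: one has to make sure all three propositions hold simultaneously for sufficiently large $n$, by taking the maximum of their thresholds, and verify the limit $\sqrt{n}\,g(n) \to \sqrt{3}$ carefully. Everything else is routine.
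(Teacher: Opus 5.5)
Your proposal is correct and follows the paper's proof step for step. You combine Lemma~\ref{0905-1} with the three propositions, show the deficit $g(n)$ satisfies $\sqrt{n}\,g(n)\to\sqrt{3}$ so Lemma~\ref{0912-2} applies, and finish with Lemma~\ref{0917-4}; you also make explicit the positivity of $\sqrt{p^*}$, which the paper leaves implicit. One harmless slip: the deficit of the first term is $\frac{13n-1}{4\sqrt{2}\,n(n+2)}$, not $\frac{13n+1}{4\sqrt{2}\,n(n+2)}$, and this does not change its $\Theta(1/n)$ order.
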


\begin{proof}
First,
\begin{align*}
\sqrt{p^*}
&\geq |(\BM{\beta}_+)_a| - \| U_{\sigma_a}^{t^*}(i \BM{\beta}_-) + \BM{\beta}_+ \| - \| \init - i \BM{\beta}_- \| \tag{by Lemma~\ref{0905-1}}
\end{align*}
By Propositions~\ref{0917-1},~\ref{0917-2}, and~\ref{0917-3},
\begin{align*}
\sqrt{p^*}
&\geq \frac{(n-1)(4n-1)}{4\sqrt{2} n(n+2)} - \sqrt{\frac{8}{n(2n+3)}} - \sqrt{2 - \frac{(n+1)(n-1)(2n+1)}{n^2(n+2)}}.
\end{align*}
Since
\begin{align*}
&\frac{(n-1)(4n-1)}{4\sqrt{2} n(n+2)}
= \frac{1}{\sqrt{2}} - \frac{13\sqrt{2}n - \sqrt{2}}{8n(n+2)}, \\
&\sqrt{2 - \frac{(n+1)(n-1)(2n+1)}{n^2(n+2)}}
= \sqrt{\frac{3n^2+2n+1}{n^2(n+2)}},
\end{align*}
we have
\[ \sqrt{p^*}
\geq \frac{1}{\sqrt{2}} - \left\{ \frac{13\sqrt{2}n - \sqrt{2}}{8n(n+2)} + \sqrt{\frac{8}{n(2n+3)}} + \sqrt{\frac{3n^2+2n+1}{n^2(n+2)}} \right\}. \]
Here, define
\[ f(n) := \frac{13\sqrt{2}n - \sqrt{2}}{8n(n+2)} + \sqrt{\frac{8}{n(2n+3)}} + \sqrt{\frac{3n^2+2n+1}{n^2(n+2)}}. \]
Clearly, $f(n) > 0$, and
\[ \frac{f(n)}{1/\sqrt{n}} = \frac{ \frac{13\sqrt{2}}{\sqrt{n}} - \frac{\sqrt{2}}{n\sqrt{n}}}{8(1+\frac{2}{n})}
+ \sqrt{\frac{\frac{8}{n}}{2+\frac{3}{n}}}
+ \sqrt{\frac{3+\frac{2}{n}+\frac{1}{n^2}}{1+\frac{2}{n}}}
\to \sqrt{3}
\]
as $n \to \infty$.
Thus, Lemma~\ref{0912-2} yields $f(n) = \Theta(1/\sqrt{n})$,
that is, $\sqrt{p^*} \geq 1/\sqrt{2} - \Theta(1/\sqrt{n})$.
Therefore, Lemma~\ref{0917-4} completes the proof.
\end{proof}

From this theorem,
we see that the lower bound of the success probability $p^*$ converges to $1/2$ at a rate of $1/\sqrt{n}$.

In previous studies~\cite{segawa2021quantum, yoshie2022quantum} on edge search using the same model, the corresponding classical search problems can also be formulated in a natural way, and comparisons with quantum search are therefore discussed.
On the other hand, in the arc search treated in this paper, the coordinate of the quantum state corresponding to a marked arc is interpreted as a pair consisting of the position of the walker and its internal state.
Since no classical search problem directly corresponds to such a setting, this paper focuses solely on the analysis of quantum walks.

\section{Remarks and discussions}

In this paper, we prove that the limiting value of the lower bound on the success probability $p^*$ for the search on the complete bipartite graphs $K_{n,n}$ is $1/2$ in time $\Theta(n)$.
In the remainder of this paper, we discuss the reason why the value $1/2$ appears in the lower bound.
It can be shown that the second and third terms on the right-hand side of Inequality~\eqref{0909-1} satisfy
\[ \lim_{n \to \infty} \| U_{\sigma_a}^{t^*}(i \BM{\beta}_-) + \BM{\beta}_+ \| = \lim_{n \to \infty} \| \init - i \BM{\beta}_- \| = 0 \]
as a consequence of Propositions~\ref{0917-2} and~\ref{0917-3}.
Therefore, the lower bound can be regarded as being primarily determined by the vector $\BM{\beta_+}$.
Furthermore, it can be seen that the amplitude is concentrated at the arcs $a$ and $a^{-1}$ when $n$ is sufficiently large, as shown below:

\begin{lem} \label{1118-2}
We have
\[ (\BM{\beta_+})_{a^{-1}} = - (\BM{\beta_+})_{a}. \]
\end{lem}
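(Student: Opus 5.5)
Compute $(\BM{\beta}_+)_{a^{-1}}$ exactly as was done for $(\BM{\beta}_+)_a$ in the proof of Proposition~\ref{0917-1}, and compare. Since $\BM{\beta}_+ = \frac{1}{\sqrt{2}}(\BM{\varphi}_+ + \BM{\varphi}_-)$, we have
\[ \BM{\beta}_+ = \frac{1}{2\sin\theta}\bigl(2 d_{\sigma_a}^* - (e^{i\theta}+e^{-i\theta}) S d_{\sigma_a}^*\bigr)\BM{f} = \frac{1}{\sin\theta}\bigl(d_{\sigma_a}^*\BM{f} - \lambda\, S d_{\sigma_a}^*\BM{f}\bigr), \]
so for any arc $z$, Lemma~\ref{0908-2} gives
\[ (\BM{\beta}_+)_z = \frac{1}{\sqrt{n}\sin\theta}\bigl(\sigma_a(z)\BM{f}_{t(z)} - \lambda\,\sigma_a(z^{-1})\BM{f}_{o(z)}\bigr). \]

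For $z=a=(x',y')$ we have $\sigma_a(a)=-1$ and $\sigma_a(a^{-1})=1$. For $z=a^{-1}$ the signs are swapped: $\sigma_a(a^{-1})=1$ and $\sigma_a(a)=-1$. By \eqref{0908-1}, both endpoints $x',y'$ carry the same value $c := \BM{f}_{x'}=\BM{f}_{y'} = \frac{-n+\sqrt{\D}}{2\sqrt{\D-n\sqrt{\D}}}$. Substituting,
\[ (\BM{\beta}_+)_a = \frac{c}{\sqrt{n}\sin\theta}(-1-\lambda), \qquad (\BM{\beta}_+)_{a^{-1}} = \frac{c}{\sqrt{n}\sin\theta}(1+\lambda), \]
which are negatives of each other. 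The first expression also agrees with \eqref{1118-1}, which serves as a consistency check.

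There is no real obstacle. The only points needing care are the sign conventions $\sigma_a(z)$ versus $\sigma_a(z^{-1})$ in Lemma~\ref{0908-2}, and the observation that $\BM{f}$ takes equal values at $o(a)$ and $t(a)$. The latter is exactly what makes the two entries differ only in the overall sign.
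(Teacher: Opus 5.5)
Your proof is correct and follows essentially the same route as the paper: evaluate the entries of $d_{\sigma_a}^*\BM{f}$ and $Sd_{\sigma_a}^*\BM{f}$ at $a^{-1}$ via Lemma~\ref{0908-2}, using $\BM{f}_{x'}=\BM{f}_{y'}$ together with the swapped signs $\sigma_a(a^{-1})=1$, $\sigma_a(a)=-1$, and then compare with \eqref{1118-1}. The only difference is that you first collapse $\BM{\beta}_+$ to $\frac{1}{\sin\theta}(d_{\sigma_a}^* - \lambda S d_{\sigma_a}^*)\BM{f}$ before taking entries, whereas the paper goes through $(\BM{\varphi}_\pm)_{a^{-1}}$; this is a harmless streamlining.
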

\begin{proof}
Indeed, Lemma~\ref{0908-2} and Theorem~\ref{0908-3} yield
\begin{align*}
(d^*_{\sigma_{a}}\BM{f})_{a^{-1}}
&= \frac{\sigma_{a}(a^{-1})}{\sqrt{n}} \BM{f}_{t(a^{-1})}
= \frac{1}{\sqrt{n}} \cdot \frac{1}{\sqrt{\Delta - n \sqrt{\Delta}}} \cdot \frac{\sqrt{\Delta} - n}{2}, \\
(Sd^*_{\sigma_{a}}\BM{f})_{a^{-1}}
&= \frac{\sigma_{a}(a)}{\sqrt{n}} \BM{f}_{o(a^{-1})}
= - \frac{1}{\sqrt{n}} \cdot \frac{1}{\sqrt{\Delta - n \sqrt{\Delta}}} \cdot \frac{\sqrt{\Delta} - n}{2},
\end{align*}
and hence
\begin{align*}
(\BM{\varphi}_{\pm})_{a^{-1}}
&= \frac{1}{\sqrt{2}\sin \theta} \left\{ (d^*_{\sigma_{a}}\BM{f})_{a^{-1}} - e^{\pm i \theta} (Sd^*_{\sigma_{a}}\BM{f})_{a^{-1}} \right\} \\
&= \frac{1}{\sqrt{2}\sin \theta} \cdot \frac{\sqrt{\Delta} - n}{2\sqrt{n} \sqrt{\Delta - n \sqrt{\Delta}}}(1+e^{\pm i \theta}).
\end{align*}
Therefore,
\begin{align*}
(\BM{\beta_+})_{a^{-1}}
&= \frac{1}{\sqrt{2}} \left( (\BM{\varphi}_{+})_{a^{-1}} + (\BM{\varphi}_{-})_{a^{-1}} \right) \\
&= \frac{\sqrt{\Delta} - n}{2 \sin \theta \cdot 2\sqrt{n} \sqrt{\Delta - n \sqrt{\Delta}}}(2+2\cos \theta) \\
&= - (\BM{\beta_+})_{a}, \tag{by \eqref{1118-1}}
\end{align*}
as claimed.
\end{proof}

\begin{pro}
We have
\[ \lim_{n \to \infty} |(\BM{\beta}_+)_z|^2
= \begin{cases}
\frac{1}{2} \quad &\text{$z \in \{a, a^{-1}\}$,} \\
0 \quad &\text{otherwise.}
\end{cases} \]
\end{pro}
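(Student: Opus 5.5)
The plan is to determine the limit of $|(\BM{\beta}_+)_a|^2$ exactly, transfer it to $a^{-1}$ by Lemma~\ref{1118-2}, and then use normalization to force every other coordinate to $0$.

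\textbf{Limit at $a$ and $a^{-1}$.} From \eqref{0905-2} we have the explicit value of $(\BM{\beta}_+)_a$. The asymptotics of its factors are as follows:
\begin{itemize}
\item $n-\sqrt{\D} = -(4n-4)/(n+\sqrt{\D}) \to -2$.
\item $3n-2+\sqrt{\D} \sim 4n$.
\item $n^2-(n-2)\sqrt{\D} = (n^4-(n-2)^2\D)/(n^2+(n-2)\sqrt{\D})$. The numerator equals $8n^2-16n+16$, as can be checked from the computation in Proposition~\ref{0917-1}. The denominator is $\sim 2n^2$, so this factor tends to $4$.
\item $\D-n\sqrt{\D} = \sqrt{\D}(\sqrt{\D}-n)$, and $\sqrt{\D}-n \to 2$, so $\D - n\sqrt{\D} \sim 2n$. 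Hence $\sqrt{n}\sqrt{\D-n\sqrt{\D}} \sim \sqrt{2}\,n$.
\end{itemize}
Putting these together,
\[ (\BM{\beta}_+)_a \sim \frac{(-2)(4n)}{2\sqrt{2}\cdot\sqrt{2}\,n\cdot 2} = -1. \]
That would make $|(\BM{\beta}_+)_a|^2 \to 1$, which contradicts $\|\BM{\beta}_+\|=1$ combined with Lemma~\ref{1118-2}, since we would get two coordinates of modulus about $1$. So I must have slipped in this bookkeeping.

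The lower bound \eqref{0905-3} already tends to $1/\sqrt{2}$, which suggests the true limit is $1/\sqrt{2}$. I will therefore recheck the $\sqrt{n}$ factors carefully, in particular the factor $2\sqrt{2n}$ against $\sqrt{n}\sqrt{\D-n\sqrt{\D}}$, expecting to obtain $|(\BM{\beta}_+)_a| \to 1/\sqrt{2}$. As a cross-check: the lower bound \eqref{0905-3} gives $\liminf_{n\to\infty} |(\BM{\beta}_+)_a|^2 \ge 1/2$. The matching upper bound then follows from normalization. Since $\BM{\varphi}_\pm$ are orthonormal eigenvectors for distinct eigenvalues, $\|\BM{\beta}_+\|=1$. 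Lemma~\ref{1118-2} then gives
\[ 2|(\BM{\beta}_+)_a|^2 \le \|\BM{\beta}_+\|^2 = 1. \]
Hence $|(\BM{\beta}_+)_a|^2 \le 1/2$ for every $n$, and the limit is exactly $1/2$. This avoids needing the exact asymptotics at all. Lemma~\ref{1118-2} transfers the result to $a^{-1}$.

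\textbf{All other coordinates.} The sum of $|(\BM{\beta}_+)_z|^2$ over $z\notin\{a,a^{-1}\}$ equals $1 - 2|(\BM{\beta}_+)_a|^2$, which tends to $0$. Each term is nonnegative, so each tends to $0$. This is in fact uniform in $z$, as it should be since the arc set grows with $n$. Alternatively, one can compute directly from Lemma~\ref{0908-2}: for such $z$, $(\BM{\beta}_+)_z$ is $O(1/n)$.

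\textbf{Main obstacle.} The only delicate point is justifying $\|\BM{\beta}_+\|=1$, which requires $\BM{\varphi}_+\perp\BM{\varphi}_-$. This holds because they are eigenvectors of the unitary $U_{\sigma_a}$ for distinct eigenvalues $e^{\pm i\theta}$, and these are distinct since $0<\theta<\pi/2$. Normalization of each $\BM{\varphi}_\pm$ comes from the cited theorem of Higuchi et al. With that in place, the sandwich between \eqref{0905-3} and the normalization bound finishes the proof.
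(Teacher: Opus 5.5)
Your proof is correct, but it gets the upper bound on $|(\BM{\beta}_+)_a|$ differently from the paper.

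\textbf{Comparison with the paper.} The paper bounds each of the four factors in \eqref{0905-2} separately. This gives the explicit estimate $|(\BM{\beta}_+)_a| < \sqrt{2n^2+6n-4}/(2n+1)$, which it sandwiches against \eqref{0905-3}. Only afterwards does it invoke $\|\BM{\beta}_+\|=1$ to dispose of the remaining coordinates. You use normalization already at the first step: since $a\neq a^{-1}$, Lemma~\ref{1118-2} gives $2|(\BM{\beta}_+)_a|^2\le\|\BM{\beta}_+\|^2=1$.

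This costs nothing extra, because the paper needs $\|\BM{\beta}_+\|=1$ anyway. It also gives a sharper, non-asymptotic bound: $|(\BM{\beta}_+)_a|\le 1/\sqrt{2}$ for every $n$, whereas the paper's explicit bound exceeds $1/\sqrt{2}$. And it removes all four auxiliary inequalities. Your justification of $\|\BM{\beta}_+\|=1$ is more than the paper provides: eigenvectors of a unitary for the distinct eigenvalues $e^{\pm i\theta}$ are orthogonal, with $0<\theta<\pi/2$ because $0<\lambda<1$. The paper's route yields only an explicit two-sided sandwich from the closed formula, which the statement does not need.

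\textbf{The exploratory computation.} You should fix or delete this paragraph. The error is not in the $\sqrt{n}$ factors. It is in the claim that $n^4-(n-2)^2\D = 8n^2-16n+16$; in fact $n^4-(n-2)^2\D = 16(n-1)^2$, which is the $4n-4$ appearing in the proof of Proposition~\ref{0917-1}. Hence $n^2-(n-2)\sqrt{\D}\to 8$, and its square root tends to $2\sqrt{2}$, not $2$. The denominator of \eqref{0905-2} is then asymptotic to $8\sqrt{2}\,n$, and your bookkeeping gives $(\BM{\beta}_+)_a\to -1/\sqrt{2}$, consistent with the rest of your argument. Your final logic does not depend on this paragraph, but a written proof should not contain a computation that contradicts its own conclusion.
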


\begin{proof}
First, we show that
\[ |(\BM{\beta_+})_{a}| < \frac{\sqrt{2n^2 + 6n - 4}}{2n+1} \]
for all sufficiently large $n$.
Indeed, since
\begin{align*}
|n - \sqrt{\Delta}| &= \left| \frac{n^2 - \Delta}{n + \sqrt{\Delta}} \right| = \frac{4n-4}{n + \sqrt{\Delta}} < \frac{4(n-1)}{n + (n+1)} = \frac{4(n-1)}{2n+1}, \\
3n-2+\sqrt{\Delta} &< 3n-2+(n+2) = 4n, \\
\frac{1}{\sqrt{\Delta - n \sqrt{\Delta}}} &= \frac{\sqrt{\Delta + n \sqrt{\Delta}}}{\sqrt{\Delta^2 - n^2 \Delta}} < \frac{\sqrt{n^2+4n-4+n(n+2)}}{\sqrt{4n^3 + 12n^2 - 32n + 16}} \\
&< \frac{\sqrt{2n^2+6n-4}}{\sqrt{4n^3}} = \frac{\sqrt{2n^2+6n-4}}{2n\sqrt{n}}, \\
\frac{1}{\sqrt{n^2-(n-2)\sqrt{\Delta}}} &= \frac{\sqrt{n^2+(n-2)\sqrt{\Delta}}}{\sqrt{n^4-(n-2)^2\Delta}} < \frac{\sqrt{n^2+(n-2)(n+2)}}{4(n-1)} < \frac{\sqrt{2} \cdot n}{4(n-1)},
\end{align*}
we have
\begin{align*}
|(\BM{\beta_+})_{a}| &= \frac{|n - \sqrt{\D}|(3n-2+\sqrt{\D})}{2\sqrt{2n} \sqrt{\D - n \sqrt{\D}} \sqrt{n^2 - (n-2)\sqrt{\D}}} \tag{by Proposition~\ref{0917-1}} \\
&< \frac{1}{2\sqrt{2n}} \cdot \frac{4(n-1)}{2n+1} \cdot 4n \cdot \frac{\sqrt{2n^2+6n-4}}{2n\sqrt{n}} \cdot \frac{\sqrt{2} \cdot n}{4(n-1)} \\
&= \frac{\sqrt{2n^2 + 6n - 4}}{2n+1}.
\end{align*}
From this and Inequality~\eqref{0905-3}, we have
\[ \frac{(n-1)(4n-1)}{4\sqrt{2} n(n+2)} < |(\BM{\beta_+})_{a}| < \frac{\sqrt{2n^2 + 6n - 4}}{2n+1}. \]
In particular, 
\[ \lim_{n \to \infty} |(\BM{\beta_+})_{a}| = \frac{1}{\sqrt{2}}. \]
In addition, by Lemma~\ref{1118-2}, 
\[ \lim_{n \to \infty} |(\BM{\beta_+})_{a}|^2 = \lim_{n \to \infty} |(\BM{\beta_+})_{a^{-1}}|^2 = \frac{1}{2}. \]
On the other hand, since $\| \BM{\beta_+} \| = 1$, we have
\[ 0 \leq \sum_{z \in \MC{A} \setminus \{a, a^{-1}\}} |(\BM{\beta_+})_{z}|^2 = 1 - |(\BM{\beta_+})_{a}|^2 - |(\BM{\beta_+})_{a^{-1}}|^2 \to 0 \]
as $n \to \infty$.
Therefore, we have
\[ \lim_{n \to \infty} |(\BM{\beta}_+)_z|^2 = 0 \]
for $z \not\in \{a, a^{-1}\}$, which completes the proof.
\end{proof}

One of the key factors that made this research possible, particularly the computations presented in Section~\ref{0918-1}, was the work of Akbari et al.~\cite{akbari2018spectrum} on the eigenvalue analysis of edge-signed graphs.
Their study is also utilized in the work of Segawa and Yoshie~\cite{segawa2021quantum}.
In future studies on quantum search for edges or arcs in various graphs,
it will be essential to begin with the eigenvalue analysis of edge-signed graphs.
Therefore, we believe that it is important to investigate in advance the eigenvalues and eigenvectors of various edge-signed graphs, also from the perspective of spectral graph theory.

\section*{Acknowledgements}
This work is supported by JSPS KAKENHI (Grant Number JP24K16970).

\bibliographystyle{plain}
\bibliography{mybibs}

\end{document}